\pdfoutput=1
\documentclass[5p,twocolumn]{elsarticle}

\usepackage{amsmath}
\usepackage{amsthm}
\usepackage{amsfonts}
\usepackage[mathscr]{euscript}
\usepackage{dsfont}
\usepackage{amssymb}
\usepackage{graphicx}
\usepackage{url}
\usepackage{rotating}
\usepackage{verbatim}
\usepackage{color}
\usepackage{svg}
\usepackage{phonetic}
\usepackage{hyperref}
\usepackage{cancel}
\usepackage{multirow}
\usepackage[normalem]{ulem}
\usepackage{soul} 
\usepackage[ruled,vlined]{algorithm2e}
\usepackage{bm}
\usepackage{upgreek}

\DeclareFontFamily{OT1}{pzc}{}
\DeclareFontShape{OT1}{pzc}{m}{it}{<-> s * [1.10] pzcmi7t}{}
\DeclareMathAlphabet{\mathpzc}{OT1}{pzc}{m}{it}

\newcommand{\com}[1]{}

\def\clap#1{\hbox to 0pt{\hss#1\hss}}

\newcommand*{\xtriangle}{{\triangle}\kern-0.76em\raisebox{0.07ex}{$\scriptstyle \times$}}
\newcommand*{\otriangle}{{\triangle}\kern-0.695em\raisebox{-0.015ex}{$\circ$}}

\usepackage{accents}
\newcommand{\ii}{\bm{\mathsf{i}}}

\newcommand{\ba}{\mathbf{a}}
\newcommand{\bx}{\mathbf{x}}
\newcommand{\bt}{\mathbf{t}}
\newcommand{\RT}{R^\mathrm{T}}
\newcommand{\freq}{\bm{\upomega}}
\newcommand{\indic}{\mathbf{1}}

\newcommand{\F}{\mathcal{F}}
\newcommand{\C}{\mathds{C}}
\newcommand{\R}{\mathds{R}}

\newcommand{\RRR}{{\R^3}}
\newcommand{\RRRR}{{\R^4}}
\newcommand{\rset}{\mathcal{S}}
\newcommand{\Smap}{S}
\newcommand{\Kmap}{K}

\newcommand{\textb}[1]{\textcolor{blue}{#1}}

\newtheorem{lemma}{Lemma}
\newtheorem{prop}{Proposition}

\journal{Computer-Aided Design}

\begin{document}

\begin{frontmatter}

\title{Analytic Methods for Geometric Modeling via Spherical Decomposition\protect\footnotemark}

\author{Morad Behandish and Horea T. Ilie\c{s}
\\
{\small Departments of Mechanical Engineering and Computer Science and Engineering, University of Connecticut, USA \\ ~ \\ Technical Report No. CDL-TR-15-07, July 23, 2015}}

\begin{abstract}
    Analytic methods are emerging in solid and configuration modeling, while providing new insights into a variety of shape and motion related problems by exploiting tools from group morphology, convolution algebras, and harmonic analysis. However, most convolution-based methods have used uniform grid-based sampling to take advantage of the fast Fourier transform (FFT) algorithm. We propose a new paradigm for more efficient computation of analytic correlations that relies on a grid-free discretization of arbitrary shapes as countable unions of balls, in turn described as sublevel sets of summations of smooth radial kernels at adaptively sampled `knots'. Using a simple geometric lifting trick, we interpret this combination as a convolution of an impulsive skeletal density and primitive kernels with conical support, which faithfully embeds into the convolution formulation of interactions across different objects. Our approach enables fusion of search-efficient combinatorial data structures prevalent in time-critical collision and proximity queries with analytic methods popular in path planning and protein docking, and outperforms uniform grid-based FFT methods by leveraging nonequispaced FFTs. We provide example applications in formulating holonomic collision constraints, shape complementarity metrics, and morphological operations, unified within a single analytic framework.
\end{abstract}

\begin{keyword}
Analytic Methods \sep
Shape Correlation \sep
Spherical Sampling \sep
Fourier Transform \sep
Collision Detection.
\end{keyword}

\end{frontmatter}

\footnotetext{This article was presented in the SIAM/ACM Symposium on Solid and Physical Modeling (SPM'2015) and published on 07/23/2015 in a special issue of CAD. For citation, please use:
\protect\\
\protect\\
    \textb{Behandish, Morad and Ilie\c{s}, Horea T., 2016. ``Analytic Methods for Geometric Modeling via Spherical Decomposition.'' Journal of Computer-Aided Design, 70, pp.100--115.}
}

\date{\small Technical Report No. CDL-TR-15-07, August 20, 2015}

\section{Introduction} \label{sec_intro}

Analytic modeling relies on describing shape and configuration pointsets as sublevel sets of functions and formulating fundamental operations (e.g., pertaining to detecting collisions, similarity, complementarity, or symmetry) in terms of correlations between those functions. For example, Minkowski operations \cite{Roerdink2000} that are central to mathematical morphology are formalized as convolutions of constituent functions and computed efficiently in the Fourier domain \cite{Lysenko2011a}.
Minkowski operations have been used extensively to formulate important problems in robot path planning \cite{Lozano-Perez1983}, mechanism workspace design \cite{Nelaturi2011}, virtual reality (graphics/haptics) \cite{Behandish2015,Behandish2015b}, protein docking \cite{Bajaj2011}, packaging and nesting \cite{Chernov2010}, and more.
Unfortunately, their combinatorial computation even for 3D polyhedral objects quickly becomes impractical with increasing number of polygons \cite{Lysenko2011a}. This can be alleviated using classical FFTs \cite{Cooley1965} for numerical implementation of convolutions, which take advantage of uniform spatial sampling. However, this and several related correlation-based problems can be solved more efficiently by using a spherical decomposition of the shape and nonuniform FFTs \cite{Potts2001}. Here we briefly review the roots of the main ideas, with a focus on collision detection (CD) and shape complementarity (SC) literature.

\begin{figure*}
    \centering
    \includegraphics[width=\textwidth]{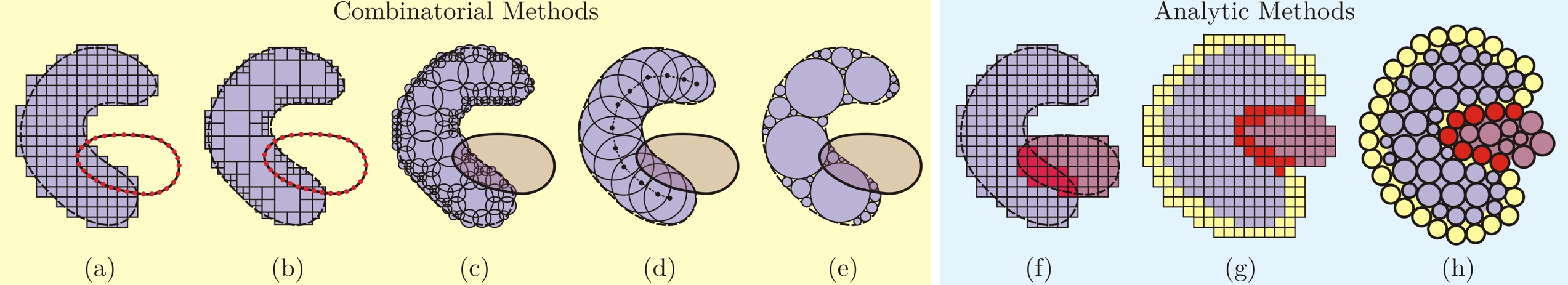}
    \caption{Research in combinatorial methods has shown that CD tests over uniform grids (i.e., `voxmaps') \cite{Sagardia2014} (a) and octrees (special case of OBB-trees) \cite{Gottschalk1996} (b) can be made more efficient if voxels are replaced with spherical primitives, e.g., built around octrees \cite{OSullivan1999} (c), sampled over the MA \cite{Hubbard1996,Bradshaw2004} (d) or packed inside using distance fields \cite{Weller2011} (e). However, the more nascent analytic methods are still mostly reliant on uniform grids for, e.g., CD testing for solids by integrating the intersection \cite{Kavraki1995} (f), and SC scoring for proteins by integrating skin overlaps \cite{Chen2003a} (g). The latter has been outperformed by grid-free correlations of atoms grouped with equal radii \cite{Bajaj2011} (h). We show that constructions in (c--e) with arbitrary radii can also be interpreted analytically as a convolution and solved by nonuniform FFTs after a geometric lifting.} \label{figure1}
\end{figure*}

\subsection{Background} \label{sec_lit}

At one end of the research spectrum are combinatorial techniques that use surface meshes or higher-order algebraic parametrizations to resolve collisions or identify matching features. Examples are polyhedral CD methods based on Voronoi-clipping/marching \cite{Mirtich1998} and oriented bounding box (OBB) trees \cite{Gottschalk1996}, or spatial enumeration-based techniques such as the Voxmap PointShell (VPS) \cite{Sagardia2014}. VPS works by a pairwise test between a shell of vertices for the moving object against a map of voxels that discretizes the stationary obstacle, and is popular in physically-based modeling in virtual environments \cite{Kim2004}. Others have identified more efficient techniques for time-critical CD by using hierarchical bounding spheres sampled on octrees \cite{OSullivan1999} or on the medial axes (MA) \cite{Hubbard1996,Bradshaw2004}, and interior sphere packing guided by distance fields \cite{Weller2011} (Fig. \ref{figure1} (a--e)). These `sphere-tree' based methods have been shown to outperform voxmap or OBB-tree based techniques in real-time applications \cite{Weller2011}---particularly because primitive collision predicates are simplified to center-distance tests as a result of the orientational symmetry of balls---and are considered state-of-the-art in practice. For a more complete survey on CD methods, we refer the reader to \cite{Jimenez2001}.

On the other end of the spectrum are analytic methods that have been more popular in robotics \cite{Lozano-Perez1983}. Unlike the combinatorial approach that searches for a collision certificate point (or lack thereof) in the intersection of the objects, the analytic approach treats the collision predicate as a Boolean combination of inequalities over some configuration space of the objects \cite{Lysenko2013}. The obstacle avoidance in path planning is, for example, treated as an optimization problem subjected to holonomic collision constraints formulated analytically as a convolution of the robot and its workspace \cite{Kavraki1995}. Most convolution-based methods have so far focused on generating uniformly sampled configuration bitmaps for all spatial positions and orientations simultaneously, which can be cumulatively computed with asymptotically optimal FFTs \cite{Cooley1965}. However, a complete description of the configuration obstacle is overkill for real-time CD. A recent work \cite{Lysenko2013}, also reliant on uniform grid-based sampling, formally reframes the approach for time-critical CD (Fig. \ref{figure1} (f)), but has not yet been compared with sphere-tree methods, nor applied to real-time applications.

In an independent line of research, numerous analytic methods for molecular surface analysis and SC-based protein docking have been developed, whose outcomes are platforms that use grid-based occupancy enumeration and leverage classical FFTs \cite{Cooley1965} such as ZDock \cite{Chen2003a}, or more recent grid-free techniques that rely on nonuniform FFTs \cite{Potts2001}, such as F$^2$Dock \cite{Bajaj2011} (Fig. \ref{figure1} (g, h)). The latter exploits the spherical shape of the atomic building blocks and implicitly represents the proteins as summations of radial kernels centered around atoms, assigning different weights to core and skin atoms. The SC score is obtained by cross-correlating these functions from different proteins, which turns into a convolution discretized over the center points. It has been shown that grid-free methods outperform uniform grid-based methods \cite{Bajaj2011} by taking advantage of the spherical geometry. A comprehensive survey on advances in protein docking is available in \cite{Ritchie2008a}.

Although objects of arbitrary shape, unlike molecules, cannot be represented {\it exactly} as finite unions of balls, the sphere-tree methods for time-critical CD were shown to be more successful in progressively approximating the shape, when compared to uniform grid- or octree-based voxelization, with a faster convergence and a better use of computational resources \cite{Hubbard1996}. Motivated by this observation, we present a generic framework for representing arbitrary shapes with finite (or countably infinite, in the limit) radial kernels, formulated as a convolution of a discrete pointset and the primitive kernel in a higher-dimensional space. The latter is described as a {\it geometric lifting} trick in Section \ref{sec_count}, and is deemed necessary due to the inevitable size difference between primitive balls, unlike the simpler case for the proteins. We show that this approach offers `the best of both worlds' by combining the computational efficiency of the sphere-tree techniques for time-critical applications (i.e., with a single configuration query) with that of the analytic methods for cumulative configuration space constructions (i.e., requiring a complete map for all spatial relations), unified under a single paradigm with analytic formalism that applies to a multitude of applications.

\subsection{Contributions}

The main contributions of this paper are to 1) present an analytic shape correlation paradigm centered around a nonuniform discretization scheme\footnote{What we mean by a `discretization scheme' is not a particular decomposition algorithm or approximation method, but a generic formalism for reconciling such a nonuniform discretization (in contrast to the extensively used uniform sampling) to analytic modeling, using Minkowski sums and convolutions. We do present one new algorithm in \ref{app_sampling}; nevertheless, other methods \cite{OSullivan1999,Hubbard1996,Bradshaw2004,Weller2011} are also applicable under the same scheme.}
that relies on progressive spherical approximations with balls of different sizes; and 2) a uniform and efficient approach to solving a variety of problems that deal with detecting collisions, shape similarity/complementarity, and shape morphing, examples of which we describe in Section \ref{sec_app}.

Moreover, we show that the spherical discretization offers an algebraic structure that is closed under Minkowski sum/product operations, and at the same time offers more appealing properties than uniform grid- or octree-based discretizations.
As the continuous geometry (of both shapes and configurations) is abstracted away by the balls, the computational implementation solely relies on convolution algebra over discrete sets specified completely by ball center coordinates and radii, allowing the use of the efficient nonequispaced FFT (NFFT) algorithm \cite{Potts2001} on the highly parallel graphics processing units (GPU) \cite{Kunis2012}.

Unlike combinatorial methods whose complexities typically depend on the syntactic size of the representation (e.g., polygon count) fixed upfront, our method allows for a choice of complexity in real-time based on the affordable resources, by proceeding deep enough down the sphere-tree---which can be constructed using any algorithm of choice, such as \cite{OSullivan1999,Hubbard1996,Bradshaw2004,Weller2011} or our own presented in \ref{app_sampling}. On the other hand, unlike grid-based analytic methods whose arithmetic complexities scale with object size and grid resolution, our method enables filling large regions with large balls and efficiently allocating more primitives to capture features of smaller size with higher fidelity.

Finally, by working in the Fourier domain, aside from converting convolutions to simple pointwise products and differentiations to simple multipliers, the method allows for `graceful' degradation of the accuracy by truncating the frequency domain representations, enabling another trade-off mechanism between running time and precision.

\subsection{Outline}

In Section \ref{sec_samp} we present the formalism for discretization of an arbitrary shape as a countable union of balls, its interpretation as a Minkowski sum, and its analytic description as a convolution. In Section \ref{sec_cor} we formulate correlation predicates in terms of Minkowski sums across multiple shapes relatively positioned and oriented in arbitrary `poses' together with their convolution forms, and use the results from the previous section to carry the discretization into the configuration space. In Section \ref{sec_app} we provide examples of applying these tools to solving fundamental problems, whose efficient GPU implementations are shown to outperform some of the state-of-the-art methods in those areas as demonstrated in Section \ref{sec_res}.
A more detailed outline of the subsections is provided at the beginning of each section.

\section{Shape Discretization} \label{sec_samp}

In this section, we present what we mean by `analytic modeling' in \ref{sec_model}. We define a discretization scheme in \ref{sec_count} as a 3D slice of a 4D Minkowski sum of a countable set of `knots'\footnote{The terminology is borrowed from Potts et al. \cite{Potts2004}, where convolution with radial kernels at `nonequispaced knots' (which underlies our development) is carried out using the nonequispaced FFT \cite{Potts2001}.}
and a primitive cone that represents balls of all permissible sizes stacked along the 4th dimension. In \ref{sec_pack}, we briefly introduce the sampling approaches that generate such discretizations from arbitrary representations. We incorporate motions in \ref{sec_conf} and show the advantages of spherical symmetry in the presence of rotations. In \ref{sec_Fourier}, we present the Fourier expansion of the spherical sampling that facilitates development of fast algorithms.

\subsection{Analytic Solid Modeling} \label{sec_model}

As usual, we assume solids to be `r-sets', defined by Requicha \cite{Requicha1977a} as compact (i.e., bounded and closed) regular semianalytic subsets of the Euclidean $3-$space $\rset \subset \mathcal{P}(\RRR)$.\footnote{The collection $\mathcal{P}(A) = \{ B ~|~ B \subset A \}$ denotes the `power set' of a set $A$, i.e., the set of all subsets of $A$.}
The regularity condition (i.e., $S = r S$) ensures {\it continuous homogeneity},\footnote{$r S = \kappa \imath S$ denotes the `regularized' (i.e., closure of interior of) $S$. The `boundary' $\partial S = \kappa S \cap \kappa c S$ is homogeneously 2D for a 3D r-set, separating the open `interior' $\imath S$, from the open `exterior' $cS$ \cite{Requicha1978}.}
while the semianalytic requirement guarantees {\it finite describability} of the set \cite{Requicha1977a}, as well as its medial axis (MA) and medial axis transform (MAT) \cite{Chazal2004}. The latter is defined as an embedding of the MA in the 4D space with the radius of the maximal ball conceptualized as a new coordinate, and contains enough information to reconstruct the solid $S$, hence can be used to develop a discretization/sampling scheme in Section \ref{sec_count}.
\begin{figure}
    \centering
    \includegraphics[width=0.48\textwidth]{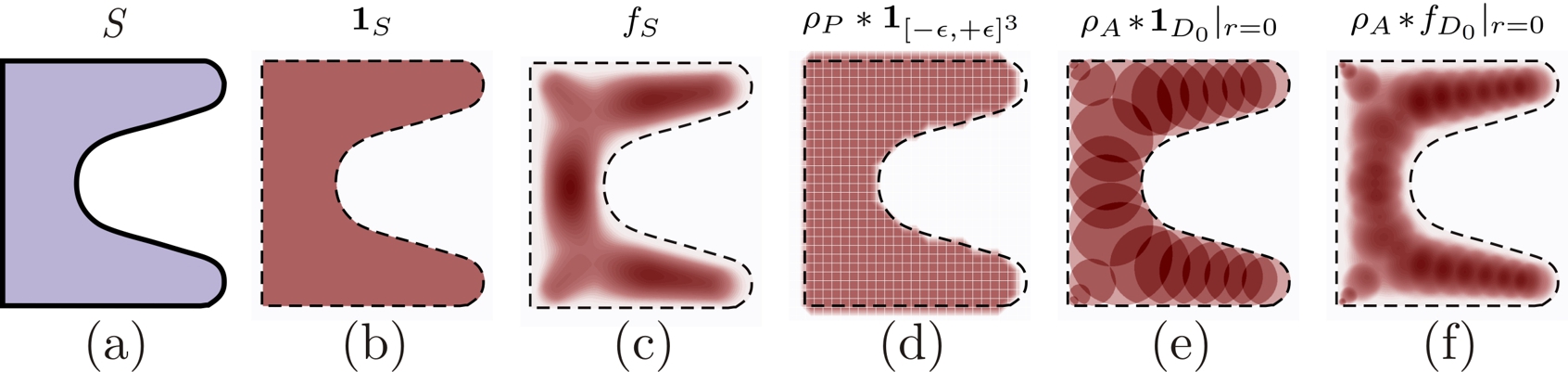}
    \caption{An r-set (a), its indicator function (b) and bump function (c), discretized using grid-based bitmap sampling (d), and grid-free spherical sampling corresponding to $\alpha \rightarrow \infty$ (e) and $\alpha := 2$ (f) in (\ref{eq_psi}).} \label{figure2}
\end{figure}

To enable an {\it analytic} formulation of the geometric modeling operators in Section \ref{sec_cor}, a solid $S \in \rset$ can be implicitly described as a sublevel set of a real-valued function $f_S: \RRR \rightarrow \R$ (called the `defining function' of $S$), first introduced by Comba \cite{Comba1968} for CD between convex objects, and used later by Ricci \cite{Ricci1973} to implement Boolean combinations.
By restricting to compactly supported nonnegative functions $f_S: \RRR \rightarrow [0, \infty)$, and using regularized $0-$sublevels, i.e., $S = U^\ast_0(f_S) = r U_0(f_S)$, where
\begin{equation}
U_t(f_S) := \{ \bx \in \text{domain of}~ f_S ~|~ f_S(\bx) > t\},
\end{equation}
the regularized set operations are supported by the analytic expressions
\begin{align}
    S_1 \cup^\ast S_2 &= U^\ast_0 (f_{S_1} + f_{S_2}), ~\text{i.e.,}~ f_{S_1 \cup^\ast S_2} = f_{S_1} + f_{S_2}, \label{eq_un} \\
    S_1 \cap^\ast S_2 &= U^\ast_0 (f_{S_1} f_{S_2}), \quad~~\text{i.e.,}~ f_{S_1 \cap^\ast S_2} = f_{S_1} f_{S_2}. \label{eq_int}
\end{align}
A popular choice for the defining function is the `indicator function' $f_S := \indic_{S} : \RRR \rightarrow \{0, 1\}$, where $\indic_S(\bx) = 1$ if $\bx \in S$ and $\indic_S(\bx) = 0$ otherwise. However, the discontinuity of the indicator function makes it difficult to compute the gradients of correlation functions between solids.
Lysenko \cite{Lysenko2013} circumvented this problem in his collision constraint formulation by employing `bump functions' $f_S \in C_0^\infty(\RRR)$ instead, which are compactly supported nonnegative functions that are also smooth, offering more appealing differential properties than the discontinuous indicator functions \cite{Kavraki1995}, the $C^0-$ (but not $C^1-$) distance fields \cite{Ricci1973}, and the rather complex R-functions \cite{Shapiro2007}.
Figure \ref{figure2} illustrates indicator and bump functions, and their discretizations introduced in Section \ref{sec_count}.

A useful result that underlies the collision predicate formulation in Section \ref{sec_col} is the {\it null-volume lemma} \cite{Lysenko2013},
which relies on the regularity (hence 3D homogeneity):
\begin{lemma}
    For every r-set $S = U^\ast_0(f_S)$ with a nonnegative defining function $f_S : \RRR \rightarrow [0, \infty)$,
    \begin{align}
        S \neq \emptyset ~~\rightleftharpoons~~ v(f_S) := \int_{\RRR} f_S(\bx) ~d\bx > 0. \label{eq_null}
    \end{align}
\end{lemma}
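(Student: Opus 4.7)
The plan is to prove both directions by translating the integral condition into a Lebesgue-measure condition on $U_0(f_S)$, and then linking that measure to $S$ through the defining relation $S = U_0^\ast(f_S) = \kappa \imath U_0(f_S)$. For nonnegative $f_S$ it is a standard consequence of monotone convergence (applied to the truncations $f_S \cdot \indic_{\{f_S > 1/n\}}$) that $v(f_S) > 0$ if and only if the 3D Lebesgue measure $\mu(U_0(f_S))$ is strictly positive, so the entire lemma reduces to proving the equivalence between $S \neq \emptyset$ and $\mu(U_0(f_S)) > 0$.

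For the forward direction, I would start from $S \neq \emptyset$ and exploit the regularity identity $S = \kappa \imath S$: if $\imath S$ were empty, then $S$ itself would be empty, so $\imath S$ is a non-empty open subset of $\RRR$ and hence carries strictly positive Lebesgue measure. Unfolding $S = \kappa \imath U_0(f_S)$ by the same closure-of-empty-is-empty argument forces the open kernel $\imath U_0(f_S)$ to be non-empty, and the inclusion $\imath U_0(f_S) \subseteq U_0(f_S)$ then yields $\mu(U_0(f_S)) > 0$, so $v(f_S) > 0$.

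For the backward direction I would argue by contrapositive. If $S = \emptyset$ then $\kappa \imath U_0(f_S) = \emptyset$, and therefore $\imath U_0(f_S) = \emptyset$. In the two cases of interest in the paper this already implies $\mu(U_0(f_S)) = 0$: when $f_S$ is a bump function, $U_0(f_S) = \{f_S > 0\}$ is itself open and coincides with its interior, hence is empty; when $f_S = \indic_S$ is the indicator of an r-set, one has $U_0(f_S) = S$, which is regular and therefore empty as soon as its interior is.

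The step I expect to be the main obstacle is precisely this backward direction for an arbitrary nonnegative $f_S$: the hypothesis $S = U_0^\ast(f_S)$ pins $f_S$ down only up to regularization of its positivity set, and in principle admits supports of positive measure with empty interior (fat-Cantor-style pathologies) that would decouple $v(f_S) > 0$ from $S \neq \emptyset$. Excluding these requires restricting $f_S$ to a well-behaved class (continuous, semianalytic, or indicator of an r-set), which are exactly the defining functions that actually arise in the paper; once that restriction is in force, both implications collapse to the short measure-theoretic observations above.
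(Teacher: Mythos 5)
Your argument is sound, but note that the paper never actually proves this lemma: it is imported verbatim from Lysenko's work, with only the one-line remark that it ``relies on the regularity (hence 3D homogeneity).'' Your forward direction is exactly that remark made precise --- $S=\kappa\imath U_0(f_S)\neq\emptyset$ forces $\imath U_0(f_S)$ to be a nonempty open set, hence of positive Lebesgue measure, hence $v(f_S)>0$ --- so on that half you are on the same track the paper gestures at. Your observation about the backward direction is a genuine and worthwhile addition rather than a defect: as literally stated, with $f_S$ an arbitrary nonnegative measurable function satisfying only $S=U_0^\ast(f_S)$, the implication $v(f_S)>0\Rightarrow S\neq\emptyset$ is false (take $f_S=\indic_C$ for a fat Cantor set $C\subset\RRR$ of positive measure and empty interior, so that $S=\kappa\imath C=\emptyset$ yet $v(f_S)>0$). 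The lemma is rescued precisely by the restriction you name: in every use the paper makes of it, $f_S$ is either the indicator $\indic_S$ of an r-set (so $U_0(f_S)=S$ and regularity closes the loop) or a continuous bump function (so $U_0(f_S)$ is open and empty interior means empty). Stating that hypothesis explicitly, as you do, makes the lemma correct where the paper leaves it implicit; the only thing I would tighten is that your equivalence $v(f_S)>0\Leftrightarrow\mu(U_0(f_S))>0$ is most cleanly obtained from continuity of measure along the exhaustion $U_0(f_S)=\bigcup_n\{f_S>1/n\}$ together with the Chebyshev-type bound $v(f_S)\geq \tfrac{1}{n}\,\mu(\{f_S>1/n\})$, rather than from monotone convergence per se.
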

In particular, if $f_S := \indic_S$ then $v(f_S)$ exactly gives the volume of $S$. Letting $S:= S_1 \cap^\ast S_2$ in (\ref{eq_null}) and applying (\ref{eq_int}), i.e., $f_S = f_{S_1} f_{S_2}$, yields the collision predicate in terms of the inner product of functions
\begin{align}
    S_1 \cap^\ast S_2 \neq \emptyset ~~\rightleftharpoons~~ v(f_{S}) = \langle f_{S_1}, f_{S_2} \rangle > 0, \label{eq_pred}
\end{align}
which underlies generalized correlation predicates for a range of applications discussed in Section \ref{sec_app}.

It is rarely the case in practical applications with arbitrarily complex shapes to have the defining function in closed form. However, one can always decompose the solid into a finite number of simpler primitives, as an immediate consequence of its finite describability \cite{Requicha1977a}, and apply (\ref{eq_un}) to combine the primitive defining functions---to which we refer as `discretization'. However, obtaining exact discretizations (e.g., curvilinear cell decompositions) from the popular constructive solid geometry (CSG) or boundary representation (B-rep) schemes \cite{Requicha1980a} is not trivial.\footnote{An r-set $S \in \rset$ is a `tame' embedding of a polyhedron $\Delta = \bigcup_{1 \leq i \leq n} \Delta_i$ under a homeomorphism $\gamma: \RRR \rightarrow \RRR$ \cite{Requicha1977a}, i.e., $S = \gamma(\Delta) = \bigcup_{1 \leq i \leq n} \gamma(\Delta_i)$, and $f_S(\bx) = \sum_{1 \leq i \leq n} (f_{\Delta_i} \circ \gamma^{-1})(\bx)$ due to (\ref{eq_un}). Although coming up with a closed form for $f_{\Delta_i}$ for a tetrahedron $\Delta_i$ is trivial, finding $\gamma$ from CSG or B-rep is not.}
An alternative is to use approximate discretizations (e.g., spatial enumerations via uniform grids or octrees) which converge to the r-set in the limit.
Next, we introduce a more general discretization scheme that subsumes these enumeration methods with non-intersecting cubic primitives (i.e., {voxelization}) as special cases, and enables other types of (possibly intersecting) primitives.

\begin{figure*}
    \centering
    \includegraphics[width=\textwidth]{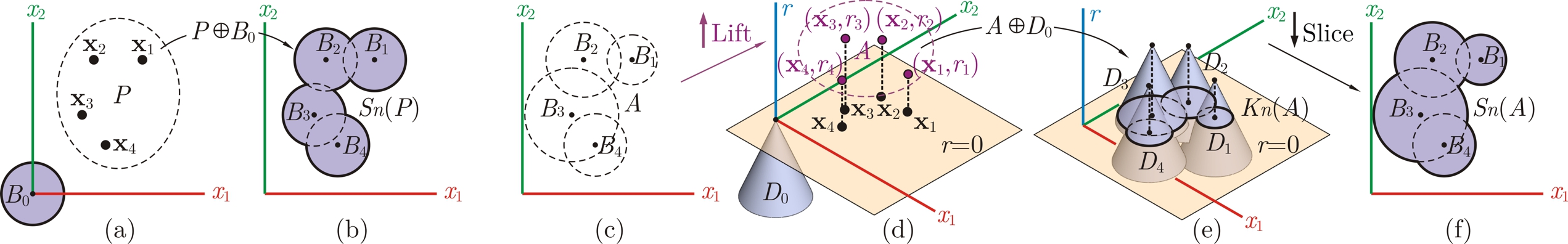}
    \caption{An equiradius sampling of a 3D r-set is a 3D Minkowski sum $\Smap_n(P) = P \oplus B_0 = U^\ast_0(\rho_P \ast f_{B_0})$ (a, b), while its nonequiradius sampling is a 3D slice through a 4D Minkowski sum $\Kmap_n(A) = A \oplus D_0 = U^\ast_0(\rho_A \ast f_{D_0})$ of lifted primitives (c--f), illustrated here for 2D r-sets.} \label{figure3}
\end{figure*}

\subsection{Discrete Constructions} \label{sec_count}

Consider the case when an r-set can be decomposed as a finite union $S = \bigcup_{1 \leq i \leq n} B_i$ hence $f_{S} (\bx) = \sum_{1 \leq i \leq n} f_{B_i} (\bx)$ due to (\ref{eq_un}). The following two cases are of prime significance for our purposes:

\paragraph{\bf Equiradius Decomposition}
First, let $B_i~(1 \leq i \leq n)$ be relocated instances of the same base shape $B_0 \in \rset$ instantiated by the different translations $\bx \mapsto \bx_i + \bx$, i.e., $B_i = \{ \bx_i + \bx ~|~ \bx \in B_0 \}$. Each such translation can be abstracted by a 3D point $\bx_i \in \RRR$, hence the discrete pointset $P := \{\bx_i\}_{1 \leq i \leq n} \subset \RRR$ (of cardinality $|P| = n$) and the base primitive $B_0$ contain all the information to reconstruct the solid. We can use the notation $S = \Smap_n(P)$, as if $\Smap_n: \R^{3n} \rightarrow \mathcal{P}(\RRR)$ is a mapping from the discrete space to the shape space, illustrated in Fig. \ref{figure3} (a, b).
A crucial observation is that this mapping can be viewed as a Minkowski sum $\Smap_n(P) = P \oplus B_0$.

To construct an analytic model, let $B_0 = U^\ast_0(f_{B_0})$ hence a defining function for each primitive instance can be obtained as $f_{B_i}(\bx) := c_i f_{B_0}(\bx - \bx_i)$, where $c_i > 0$ are weight coefficients (arbitrarily assigned, for now).
The Boolean summation of the balls from (\ref{eq_un}) takes the form
\begin{align}
    f_{\Smap_n(P)}(\bx) &= \sum_{1 \leq i \leq n} f_{B_i} (\bx) = \sum_{1 \leq i \leq n} c_i f_{B_0}(\bx - \bx_i), \label{eq_Psum}
\end{align}
which can be viewed as a discrete convolution. To make it compatible with continuous convolutions to be introduced in Section \ref{sec_cor}, let us rewrite (\ref{eq_Psum}) as an integral
\begin{align}
    f_{\Smap_n(P)}(\bx) = \sum_{1 \leq i \leq n} \int_{\RRR} \delta^3(\bx' - \bx_i) \big[ c_i f_{B_0}(\bx - \bx') \big] d\bx', \nonumber
\end{align}
in which the integration variable is $\bx' = (x_1', x_2', x_3') \in \RRR$, and $d\bx' = dx_1' dx_2' dx_3'$ is the infinitesimal volume element.
$\delta^3(\bx') = \delta(x_1')\delta(x_2')\delta(x_3')$ is the 3D Dirac delta function.
If we assume a density function of the form
\begin{align}
    \rho_P (\bx) := \sum_{1 \leq i \leq n} c_i \varsigma_{\bx_i}(\bx) = \sum_{1 \leq i \leq n} c_i \delta^3(\bx - \bx_i), \label{eq_rhoP}
\end{align}
where $\varsigma_{\bx_0}(\bx) = \delta^3(\bx - \bx_0)$,
to represent the discrete pointset $P$ as a collection of spatial impulses of intensities $c_i$ at positions $\bx = \bx_i$ (i.e., each point carrying a `lumped mass' of $c_i$),
the Minkowski sum can be analytically expressed as a convolution
\begin{align}
    f_{\Smap_n(P)} = \rho_P \ast f_{B_0}, ~\text{i.e.,}~ P \oplus B_0 = U^\ast_0(\rho_P \ast f_{B_0}). \label{eq_ballconv}
\end{align}
A particularly favorable choice for the primitive shape (for reasons to be explained in Section \ref{sec_conf}) is a closed $L^2-$ball, i.e.,
\begin{align}
    B_i := B(\bx_i, r) = \{ \bx \in \RRR ~|~ \|\bx - \bx_i \|_2 \leq r \} \label{eq_L2ball}
\end{align}
are balls of constant radius $r > 0$ centered at $\bx_i \in P$. Let $f_{B_0} (\bx) := \psi_\alpha (\| \bx \|_2 / r)$ where the function $\psi_\alpha : \R \rightarrow [0, 1]$ is a generic cut-off kernel (also referred to as the `mollifier' or the `bump') with the closed form
\begin{equation}
    \psi_\alpha(x) = \left\{
    \begin{array}{ll}
        e^{(1 - |x|^{-\alpha})^{-1}} &\text{if}~ |x| < 1,\\
        0 &\text{otherwise},
    \end{array} \right. \label{eq_psi}
\end{equation}
which can be thought of as a smoothed extension of the discontinuous cut-off function $\indic_{(-1, +1)} (x) = \lim_{\alpha \rightarrow \infty} \psi_\alpha(x)$, and the resulting $f_{B_0} (\bx) = \psi_\alpha (\| \bx \|_p / r)$ is a mollified extension of the binary indicator $\indic_{\imath B_0}(\bx) = \lim_{\alpha \rightarrow \infty} f_{B_0} (\bx)$.

The spatial enumeration schemes over uniform grids---ranging from bitmap encoding for path planning \cite{Kavraki1995} to rasterized density functions for protein docking \cite{Chen2003a}, illustrated in Fig. \ref{figure1} (f, g)---can be viewed as special cases of this scheme with cubic primitives (i.e., $L^\infty-$ instead of $L^2-$balls in (\ref{eq_L2ball})) with an additional disjointness condition that is unnecessary for our purposes.
Grid-free molecular modeling based on Gaussian densities for protein surface reconstruction \cite{Duncan1993} and protein docking \cite{Bajaj2011} (Fig. \ref{figure1} (h)) are more closely related to the scheme proposed here, as they use spherical primitives.\footnote{Except that these methods use Gaussian kernels rather than compactly supported cut-off kernels such as the one in (\ref{eq_psi}), and use the $1-$ (instead of the $0-$) isolevel to define the molecular surface.}

Generalizing this grid-free discretization to solid objects of arbitrary geometric complexity would enable more efficient use of the computational resources by adaptively approximating the shape, filling large interior regions with fewer primitives, and allocating resources to capture the details of surface features.
A simple solution is to use a recursive decomposition (e.g., an octree) and take the leaf cells (or balls enclosed by their bounding spheres) as primitives (Fig. \ref{figure1} (c)), collected into groups of constant radii according to their level in the tree, for (\ref{eq_ballconv}) to apply. Hubbard \cite{Hubbard1996} showed that octree-based spherical approximation is nonoptimal in terms of Hausdorff convergence, and compares poorly with sampling the centers of primitive balls over the MA (Fig. \ref{figure1} (d)). However, the latter requires a generalization of (\ref{eq_ballconv}) that supports different sizes for the balls, which we address next.

\paragraph{\bf Nonequiradius Decomposition}
The generalization is enabled by a simple {\it geometric lifting} trick. This time, let $B_i~(1 \leq i \leq n)$ be translated and scaled instances of the base shape $B_0 \in \rset$, i.e., instantiated by the affine transformation $\bx \mapsto \bx_i + (r_i \bx)$, i.e., $B_i = \{ \bx_i + (r_i \bx) ~|~ \bx \in B_0 \}$. Each such transformation can be abstracted by a 4D point $\ba_i := (\bx_i , r_i) \in \RRRR$. The discrete pointset $A := \{ \ba_i \}_{1 \leq i \leq n} \subset \RRRR$ (of cardinality $|A| = n$) contains all the information to reconstruct the solid, and $S = \Smap_n(A)$ can be viewed as a Minkowski product \cite{Roerdink2000} of the form $\Smap_n(A) = A \otimes \gamma_0(B_0)$, defined over the group of the aforementioned instance transformations $G \cong \RRRR$, where $\gamma_r: \RRR \hookrightarrow \RRRR$, $\gamma_r(\bx) = (\bx, r)$, and $\gamma_0(B_0) = B_0 \times \{0\}$ is a trivial embedding in $G$.

A more helpful way of looking at this formulation is to think of each primitive $B_i \subset \RRR$ as a cross-section (i.e., a 3D `slice' orthogonal to the $r-$axis at $r = 0$) through a hypothetical hypercone $C_i \subset \RRRR$ whose apex is located at $(\bx_i, r_i) \in A$, illustrated in Fig \ref{figure3} (c, d).
To ensure compactness of the 4D objects, let us replace the unbounded cones $C_i$ with {\it trimmed} half-cones $D_i \subset \RRRR$:
\begin{align}
    C_i := C(\bx_i, r_i) &= \{ (\bx, r) \in \RRRR ~|~ \|\bx - \bx_i \|_2 \leq |r - r_i| \}, \nonumber \\
    D_i := D(\bx_i, r_i) &= C(\bx_i, r_i) \cap (\RRR \times [r_i - L, r_i]), \label{eq_L2cone}
\end{align}
where $L > \max_{1 \leq i \leq n} r_i$ to guarantee that all displaced half-cones will intersect the $r = 0$ hyperplane.
If we define $K_n := \bigcup_{1 \leq i \leq n} D_i$, the 3D solid $\Smap_n (A)$ becomes a slice of the 4D solid $\Kmap_n(A)$ at $r = 0$, where $\Kmap_n: \R^{4n} \rightarrow \mathcal{P}(\RRRR)$ is the discretization mapping illustrated in Fig. \ref{figure3} (d, e).
Unlike the scaled primitives $B_i$ that have different sizes, their cones $C_i$ and $D_i$ are all translated instances of the same base shapes $C_0$ and $D_0$, respectively, whose apexes are at the origin. Their union can thus be viewed as a Minkowski sum $\Kmap_n(A) = A \oplus D_0$. The 3D solid is then obtained as a 3D slice $\Smap_n(A) = \Kmap_n(A) |_{r=0} := \gamma_0^{-1}(\Kmap_n(A) \cap \gamma_0(\RRR))$,\footnote{For a set $K \subset \RRRR$, we use the simplified notation $K|_{r=r_0}$ for its $r=r_0$ slice projected to $\RRR$, i.e., $K|_{r=r_0} := \gamma_{r_0}^{-1}(K \cap \gamma_{r_0}(\RRR))$, where $\gamma_r: \RRR \hookrightarrow \RRRR$ is defined as $\gamma_r(\bx) = (\bx, r)$ hence $\gamma_r^{-1}(\bx, r) = \bx$.}
illustrated in Fig. \ref{figure3} (e, f).

To obtain an analytic form for the 4D geometry similar to (\ref{eq_ballconv}), let $D_0 = U^\ast_0(f_{D_0})$, hence a defining function for each compact cone can be obtained as $f_{D_i} = f_{D_0}(\ba - \ba_i)$, where $\ba = (\bx, r) \in \RRRR$. If we form an impulsive density function similar to (\ref{eq_rhoP})
\begin{align}
    \rho_A (\ba) :&= \sum_{1 \leq i \leq n} c_i \varsigma_{\ba_i}(\ba) = \sum_{1 \leq i \leq n} c_i \delta^4(\ba - \ba_i), \label{eq_rhoA}
\end{align}
where $\delta^4(\ba) = \delta^3(\bx) \delta(r)$, the convolution in (\ref{eq_ballconv}) generalizes to the $4-$space as
\begin{align}
    f_{\Kmap_n(A)} = \rho_A \ast f_{D_0}, ~\text{i.e.,}~ A \oplus D_0 = U^\ast_0(\rho_A \ast f_{D_0}), \label{eq_coneconv}
\end{align}
whose domain restriction to the $r = 0$ hyperplane gives a bump function for the 3D solid as $f_{\Smap_n(A)} = f_{\Kmap_n(A)}|_{r=0}$,\footnote{For a function $f_{K}: \RRRR \rightarrow \R$, we use the notation $f_K|_{r=r_0}: \RRR \rightarrow \R$ to denote the restriction (and trivial projection) of its domain to the $r = r_0$ hyperplane, i.e., $f_K|_{r=r_0}(\bx) = f_K(\bx, r_0)$.}
therefore $\Smap_n(A) = U^\ast_0(f_{\Kmap_n(A)}|_{r=0})$.
As before, we can let $f_{C_0} (\bx, r) := \psi_\alpha (\| \bx \|_p / r)$ whose unbounded support is smoothly trimmed along the 4th dimension to $r \in (-L, 0)$ as $f_{D_0} (\bx, r) := \psi_\alpha (\| \bx \|_p / r) \psi_\alpha(1 - 2r/L).$

\subsection{Sampling Strategies} \label{sec_pack}

Clearly, a solid $S \in \rset$ of arbitrary shape cannot in general be exactly constructed as a finite union of balls $\Smap_n(A) = A \otimes \gamma_0(B_0)$, i.e., a 3D slice of $\Kmap_n(A) = A \oplus D_0$. However, a similar construction is possible by replacing the finite set of ball centers $P$ and its radius-lift $A$, with the MA (denoted by $\mathcal{M}[\imath S]$) and the MAT (denoted by $\mathcal{T}[\imath S]$), respectively. In fact $S = \mathcal{T}[\imath S] \otimes^\ast \gamma_0(B_0)$, which is a 3D slice of $K = \mathcal{T}[\imath S] \oplus^\ast D_0$.\footnote{The MA/MAT of an r-set are not necessarily closed, and neither are their Minkowski sums with a closed ball/cone, which is why the regularized Minkowski sum/product (denoted by $\oplus^\ast/\otimes^\ast$) need to be used.}
This construction can be thought of as sweeping a resizeable ball along the MA (with prespecified scaling for the balls along the MA trajectory), or equivalently, sweeping a rigid cone along the MAT in 4D followed by a 3D slicing.
Unfortunately, the convolution formulation is not as simple in this case, because MA and MAT are not homogeneous, but are in general made of $2-$, $1-$ or $0-$dimensional subanalytic components for 3D solids \cite{Chazal2004}. We conjecture that it is possible to generalize the density function $\rho_A$ defined in (\ref{eq_rhoA}) to $\rho_{\mathcal{T}[\imath S]}$, by using Dirac delta functions of various orders over different regions of $\mathcal{T}[\imath S]$ depending on their dimensionalities, whose formal treatment we postpone to a follow-up study on skeletal density functions (SDF). 

Here we take a simpler approach, by assuming sequences of finite samples $A \subset \RRRR$ of different sizes $|A| = 1, 2, \cdots$ that progressively approximate the shape. The set $\Smap_n(A)$ is called an $\epsilon-$approximation of $S$ if $d_H(S_n(A), S) \leq \epsilon$, where $d_H$ denotes the Hausdorff $L^2-$metric.
It is important to emphasize that our formulation in Section \ref{sec_count} does not impose any theoretical restriction on the sampling algorithm, as long as it guarantees that as $n \rightarrow \infty$, $S_n(A)$ converges to $S$ (i.e., $\lim_{|A| \rightarrow \infty} d_H(S_n(A), S) = 0$), and the convolution in (\ref{eq_coneconv}) holds in the limit for the {\it countably} infinite set of knots $A$. A variety of methods that have been in use in the CD literature \cite{OSullivan1999,Hubbard1996,Bradshaw2004,Weller2011} (Figs. \ref{figure1} (c--e)) can be used, two of which are briefly reviewed here due to their theoretical significance and computational relationship with our algorithm in \ref{app_sampling}.

Hubbard \cite{Hubbard1996} proposed an algorithm that populates the maximal balls over the MA (Fig. \ref{figure1} (d)), obtained from pruning the Voronoi diagram of a dense sampling over the boundary, and follows a principle of conservative coverage to create a bounding sphere-tree. In terms of our formulation, this is equivalent to selecting $A \subset \mathcal{T}(S)$ and has been shown in \cite{Hubbard1996} to converge to the shape faster than octree-based sampling (Fig. \ref{figure1} (c)). However, MA and MAT are unstable with respect to $C^0-$ and $C^1-$perturbations of the boundary \cite{Chazal2004}, making their computations extremely difficult in the presence of noise/errors.
Weller and Zachmann \cite{Weller2011} proposed the inner-sphere tree (IST) method that precomputes the distance function over a uniform grid and uses a greedy algorithm to pack the interior of the solid, giving priority to the largest ball that fits at each step. This approach has been proven effective for real-time applications \cite{Weller2011}, but leaves out void spaces in the interior of the set that are undesirable for analytic modeling, and is nonoptimal for thin objects.

We use a similar greedy algorithm that is guided by the SDF field \cite{Behandish2014b}, which creates spherical samples that are similar to the outcomes of MA-based algorithms \cite{Hubbard1996,Bradshaw2004}, without the need to compute and prune the MA, and is capable of producing better approximations than distance-based sphere packing \cite{Weller2011} with fewer number of balls.
Our algorithm guarantees bounds on the Hausdorff distance from the original shape that are proportional to the SDF grid resolution. This property is used in Section \ref{sec_res} as a basis for time complexity comparisons between operations on uniform samples versus spherical samples generated with the same input grid resolution.
To prevent distraction from this article's main focus on Minkowski discretizations and their Fourier reconciliations, the details of our spherical decomposition algorithm along with its topological properties and approximation error bounds are postponed to \ref{app_sampling}.

\subsection{Why Spherical Primitives?} \label{sec_conf}

The motion of rigid bodies can be abstracted as trajectories of points in the so-called `configuration space' (commonly abbreviated as the $\mathcal{C}-$space), first introduced to the field of robotics by Lozano-Perez \cite{Lozano-Perez1983}.
Every rigid motion in 3D can be parameterized by a tuple
\begin{equation}
    M = (R, \bt) \in SE(3), \quad SE(3) \cong SO(3) \ltimes \RRR,
\end{equation}
where $R \in SO(3)$ is a rotation described by a $3 \times 3$ special orthogonal matrix (i.e., $\RT = R^{-1}$ and $\mathrm{det}(R) = +1$) and $\bt \in \RRR$ is a translation described by a $3-$vector.
See \ref{app_SE3} for the definition, properties, and terminology of the motion group and its action on r-sets.

The advantage of using primitives with spherical symmetry becomes evident in the light of the isometric property of $SE(3)$.
A 3D ball $B_0 := B(\mathbf{0}, r)$ is invariant under 3D rotations, i.e., $RB_0 = B_0$ hence $(R, \bt) B_0 = B(\bt, r)$.
The same invariance property can be asserted for 4D cones $C_0$ and $D_0$ whose axes stay parallel to the $r-$axis after 3D translations and rotations. Accordingly, the transformation of $\Smap_n(A)$ and $\Kmap_n(A)$ amounts only to a relocation of the center and apex positions. For the equiradius case, the Minkowski sum in (\ref{eq_ballconv}) for the transformed solid is given by
\begin{align}
    &(R, \bt) \Smap_n(P) ~= \Smap_n((R, \bt)P) = ((R, \bt)P) \oplus B_0, \label{eq_Mink0}
\end{align}
whose analytic expression is given by the bump function
\begin{align}
    f_{(R, \bt) \Smap_n(P)} = \rho_{(R, \bt)P} \ast f_{B_0} = \left[ \rho_{P} \circ (R, \bt)^{-1} \right] \ast f_{B_0}, \label{eq_RPconv}
\end{align}
where the lumped density
\begin{align}
    \rho_{(R, \bt)P} (\bx) = \sum_{1 \leq i \leq n} c_i \delta^3(\bx - (R, \bt)\bx_i),
\end{align}
is an implicit representation of the transformed set of 3D knots $(R, \bt)P = \{(R, \bt)\bx ~|~ \bx \in P \}$.
In a similar fashion, for nonequiradius case the Minkowski sum in (\ref{eq_coneconv}) for the transformed lifted geometry is given by
\begin{align}
    &(R, \bt) \Kmap_n(A) = \Kmap_n((R, \bt)A) = ((R, \bt)A) \oplus D_0, \label{eq_Mink2}
\end{align}
whose analytic expression is given by the bump function
\begin{align}
    f_{(R, \bt) \Kmap_n(A)} = \rho_{(R, \bt)A} \ast f_{D_0} = \left[ \rho_{A} \circ (R, \bt)^{-1} \right] \ast f_{D_0}, \label{eq_RAconv}
\end{align}
where the lumped density
\begin{align}
    \rho_{(R, \bt)A} (\ba) = \sum_{1 \leq i \leq n} c_i \delta^4(\ba - (R, \bt)\ba_i), \label{eq_rhoAtrans}
\end{align}
is an implicit representation of the transformed set of 4D knots $(R, \bt)A = \{(R, \bt)\ba ~|~ \ba \in A \}$, using the trivial extension $(R, \bt)\ba = ((R, \bt)\bx, r)$ for $\ba = (\bx, r) \in \RRRR$.

The strength of the discretization schemes that transform according to (\ref{eq_Mink0}) through (\ref{eq_rhoAtrans}) lies in the rotation invariance of the primitive sets $B_0$ or $D_0$ and their radial kernels $f_{B_0}$ or $f_{D_0}$, which appear in the same form in all equations before and after motion. We show in Section \ref{sec_cor} that the same form is conserved across Minkowski sums and related cross-correlations between pairs of discretized objects. The practical implication is that the primitives do not need to take an explicit role in the computations, and numerical algorithms deal only with the discrete sets $P$ or $A$ or their density functions $\rho_P$ or $\rho_A$.

\subsection{Fourier Expansions} \label{sec_Fourier}

A major motivation behind using analytic methods is the efficient tractability of convolutions and differentiations via Fourier transforms.
Using the orthonormal basis of the form $e^{2\pi \ii (\freq \cdot \bx)}$ (where $\ii^2 = -1$), every r-set's bump function $f_S \in L^2(\RRR)$ can be expanded into its components given by $\hat{f}_S \in L^2(\RRR)$, using the continuous Fourier transform (CFT):\footnote{Bump functions are square-integrable, i.e., $C^\infty_0(\RRR) \subset L^2(\RRR)$.} 
\begin{equation}
    \hat{f}_S = \langle f_S, e^{ +2\pi \ii (\freq \cdot \bx)} \rangle ~\rightleftharpoons~ f_S = \langle \hat{f}_S, e^{ -2\pi \ii (\freq \cdot \bx)} \rangle. \label{eq_CFT_def}
\end{equation}
For most applications such as the ones discussed in Section \ref{sec_app}, the current analytic methods rely on discretizing the defining functions over a dense sample of points in the interior \cite{Kavraki1995,Chen2003a,Lysenko2013}, which turns the CFT into a discrete Fourier transform (DFT).
If the sampling is over a uniform grid, the DFT can be implemented very efficiently using the well-known fast Fourier transform (FFT) algorithm, first discovered by Cooley and Tukey \cite{Cooley1965}. The rotations are handled by an interpolation over the frequency grid and translations are embedded into convolutions.
However, when the objects are discretized with spherical primitives whose centers form a nonuniform set of points, the CFT interpolates a nonequispaced DFT (NDFT), to which the classical FFT algorithms do not apply. Potts et al. \cite{Potts2001} developed a nonequispaced FFT (NFFT) algorithm for efficient implementation of NDFT sums in asymptotically similar running times with the classical FFT.
See \ref{app_CFT} for the definition, properties, and complexity of Fourier transforms.

If an r-set $S \in \rset$ moves via $(R, \bt) \in SE(3)$, its transformed bump function changes in the frequency domain as $\hat{f}_{(R, \bt)S} = \hat{\varsigma}_\bt \hat{f}_{RS} = \hat{\varsigma}_\bt (\hat{f}_{S} \circ \RT)$, where $\varsigma_\bt(\bx) := \delta(\bx - \bt)$ denotes a shifted Dirac delta function that transfers to $\hat{\varsigma}_\bt(\freq) = e^{-2\pi \ii (\freq \cdot \bt)}$.
For an equiradius discretization $\Smap_n(P) = P \oplus B_0$ the Fourier expansion of $f_{\Smap_n(P)} = \rho_{P} \ast f_{B_0}$ is a simple product $\hat{f}_{\Smap_n(P)} = \hat{\rho}_{P} \hat{f}_{B_0}$. Applying the CFT to (\ref{eq_RPconv}) gives
\begin{align}
    \hat{f}_{(R, \bt)\Smap_n(P)} = \hat{\varsigma}_\bt \hat{\rho}_{RP} \hat{f}_{B_0} = \hat{\varsigma}_\bt \left( \hat{\rho}_{P} \circ \RT \right) \hat{f}_{B_0}, \label{eq_fballhat}
\end{align}
where the density function given in (\ref{eq_rhoP}) is transferred to
\begin{align}
    \hat{\rho}_{P} (\freq) = \sum_{1 \leq i \leq n} c_i \hat{\varsigma}_{\bx_i}(\freq) = \sum_{1 \leq i \leq n} c_i e^{ -2\pi \ii (\freq \cdot \bx_i)}. \label{eq_rhoPhat}
\end{align}
The evaluation of (\ref{eq_rhoPhat}) from a nonuniform set of 3D knots to a uniform 3D frequency grid amounts to a one-sided 3D NDFT computation. 

For nonequiradius discretization $\Smap_n(A) = \Kmap_n (A)|_{r=0}$ where $\Kmap_n (A) = A \oplus D_0$, we have $f_{\Kmap_n(A)} = \rho_{A} \ast f_{D_0}$ hence $\hat{f}_{\Kmap_n(A)} = \hat{\rho}_{A} \hat{f}_{D_0}$. Applying the CFT to (\ref{eq_RAconv}) gives
\begin{align}
    \hat{f}_{(R, \bt)\Kmap_n(A)} = \hat{\varsigma}_{\ba} \hat{\rho}_{RA} \hat{f}_{D_0} = \hat{\varsigma}_{\ba}\left( \hat{\rho}_{A} \circ \RT \right) \hat{f}_{D_0}, \label{eq_fconehat}
\end{align}
where $\ba = (\bt, r) \in \RRRR$ represents a lifted translation, and the density function given in (\ref{eq_rhoA}) is transferred to
\begin{align}
    \hat{\rho}_{A} (\bm{\upupsilon}) = \sum_{1 \leq i \leq n} c_i \hat{\varsigma}_{\ba_i}(\bm{\upupsilon}) = \sum_{1 \leq i \leq n} c_i e^{-2\pi \ii (\bm{\upupsilon} \cdot \ba_i)}, \label{eq_rhoAhat}
\end{align}
in which $\hat{\varsigma}_{\ba}(\bm{\upupsilon}) = \hat{\varsigma}_{\bt}(\freq) e^{-2\pi \ii (\eta r)}$ for the lifted physical domain $\ba_i = (\bx_i, r_i) \in A$ is obtained in a similar fashion to (\ref{eq_rhoPhat}) using NDFTs, except that the frequency domain is also lifted to 4D as $\bm{\upupsilon} = (\freq, \eta) \in \RRRR$.

\begin{figure*}
    \centering
    \includegraphics[width=\textwidth]{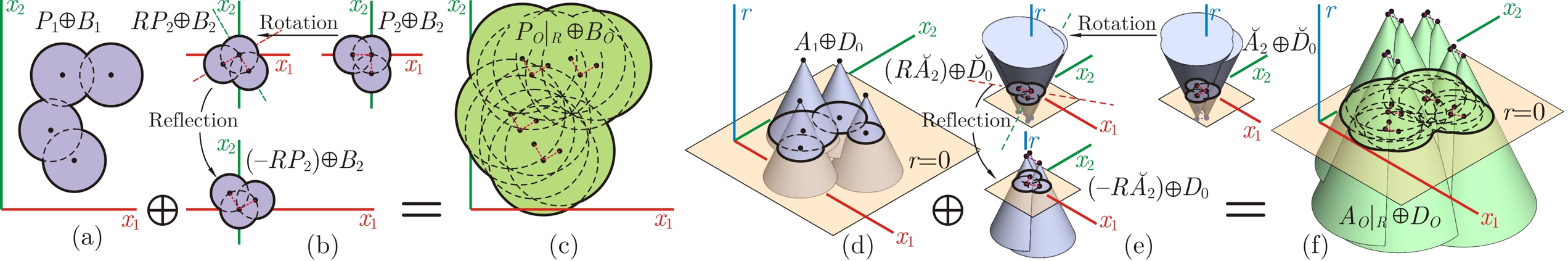}
    \caption{The $\mathcal{C}-$obstacle is obtained as a Minkowski sum. The discretization scheme is closed under Minkowski sums for both equiradius and nonequiradius samples. For the latter, the summands must have the same cone orientation along the $r-$axis, requiring a pre-reflection.} \label{figure4}
\end{figure*}

\section{Correlation Functions} \label{sec_cor}

Having defined a spherical sampling in terms of a Minkowski sum of discrete knots and balls/cones alongside their analytic formulation in Section \ref{sec_samp}, we now investigate how they embed into cross-correlations between pairs of objects. In \ref{sec_cross}, we show that spherical discretization structure is preserved and carried into configuration pointsets, whose Fourier formulation is given in \ref{sec_CFT}.

\subsection{Discrete Correlations} \label{sec_cross}

Given two r-sets $S_1, S_2 \in \rset$, we define their `correlation function' as $g_{S_1, S_2}: SE(3) \times SE(3) \rightarrow \R$ where
\begin{align}
    g_{S_1, S_2}(M_1, M_2) &= \langle f_{M_1 S_1}, f_{M_2 S_2} \rangle, \label{eq_cor_M1M2}
\end{align}
accumulates the pointwise multiplication of the overlapped shape descriptor functions of $S_1$ and $S_2$ moved using $M_1 = (R_1, \bt_1)$ and $M_2 = (R_1, \bt_1)$, respectively.
The function in (\ref{eq_cor_M1M2}) can formulate, for example, a holonomic collision constraint (Section \ref{sec_col}), a shape complementarity metric (Section \ref{sec_comp}), or a morphological operator (Section \ref{sec_prod}).
For instance, comparing (\ref{eq_cor_M1M2}) with (\ref{eq_null}) shows that if $S_1 = U^\ast_0(f_{S_1})$ and $S_2 = U^\ast_0(f_{S_2})$, then $g_{S_1, S_2}(M_1, M_2)$ defines a collision predicate for the moved solids, i.e., $M_1 S_1 \cap^\ast M_2 S_2 \neq \emptyset$ iff $g_{S_1, S_2}(M_1, M_2) > 0$.
It is easy to show that the correlation function only depends on the {\it relative} configuration $M := M_1^{-1} M_2$, i.e., the motion of $S_2$ observed from a coordinate frame attached to $S_1$.
Letting $M = (R, \bt) = (R_1, \bt_1)^{-1}(R_2, \bt_2) = (\RT_1 R_2, \RT_1(\bt_2 - \bt_1))$, the alternative formulation $g_{S_1,S_2}: SE(3) \rightarrow \mathds{R}$ becomes
\begin{align}
    g_{S_1, S_2}(R, \bt) &= \langle f_{S_1}, (f_{R S_2} \circ T^{-1}) \rangle, \label{eq_relg_0}
\end{align}
noting that $f_{(R, \bt)S} = f_{RS} \circ T^{-1} = (f_S \circ \RT) \circ T^{-1}$ where $T: \RRR \rightarrow \RRR$, $T\bx = \bx + \bt$. The inner product in (\ref{eq_relg_0}) can be viewed as a 6D noncommutative convolution over $SE(3)$ \cite{Lysenko2011a}. For numerical tractability, we decompose the motion into rotational and translational parts, and view the latter as a 3D commutative convolution:
\begin{align}
    g_{S_1, S_2}(R, \bt) &= \left( f_{RS_2} \star f_{S_1} \right) (\bt) = \left( f_{S_1} \ast f_{-RS_2} \right) (\bt), \label{eq_relg}
\end{align}
where $-S = \{-\bx ~|~ \bx \in S\}$ denotes a reflection with respect to the origin. This defines a {\it relative} collision predicate, i.e., $S_1 \cap^\ast (R, \bt) S_2 \neq \emptyset$ iff $g_{S_1, S_2}(R, \bt) > 0$,
and brings us to the important concept of a `configuration obstacle' (or the $\mathcal{C}-$obstacle for short) in robotics \cite{Lozano-Perez1983}, defined as
\begin{equation}
    O_{S_1, S_2} = \{ (R, \bt) \in SE(3) ~|~ g_{S_1, S_2}(R, \bt) > 0 \}, \label{eq_obs}
\end{equation}
whose complement $cO_{S_1, S_2}$ is the `free space'.
The obstacle space does not contain the contact space by definition, i.e., is an open set, whose closure $O^\ast_{S_1, S_2} := \kappa O_{S_1, S_2}$ is a 6D r-set \cite{Lysenko2013}.
For a fixed rotation $R \in SO(3)$ the translational obstacle defined by
\begin{equation}
    O_{S_1, S_2}|_{R} := \{ \bt \in \RRR ~|~ g_{S_1, S_2}(R, \bt) > 0 \}
\end{equation}
is a 3D slice through the 6D obstacle, whose closure is obtained by offsetting $S_1$ with $-RS_2$ given by a Minkowski sum
\begin{equation}
    O^\ast_{S_1, S_2}|_{R} = S_1 \oplus (-R S_2) = U^\ast_0 (f_{S_1} \ast f_{-RS_2}). \label{eq_Tobs}
\end{equation}
As a result of the definition in (\ref{eq_obs}), $g_{S_1, S_2}(R, \bt)$ serves as a defining function of the regularized $\mathcal{C}-$obstacle as a $0-$sublevel set $O^\ast_{S_1, S_2} = U^\ast_0(g_{S_1, S_2})$,
referred to as the `gap function' \cite{Lozano-Perez1983}. Similarly, the restriction of the gap function to a fixed $R \in SO(3)$ is a defining function for the 3D slice $O^\ast_{S_1, S_2}|_{R} = U^\ast_0(g_{S_1, S_2}|_{R})$, where $g_{S_1, S_2}|_{R} = f_{S_1} \ast f_{-RS_2}$.
Next we investigate the discretization of the $\mathcal{C}-$obstacle (and the Minkowski sums in general) when spherical sampling is used for the constituent parts.

\paragraph{\bf Equiradius Correlations}
First, let $S_1 = \Smap_{n_1}(P_1)$ and $S_2 = \Smap_{n_2}(P_2)$ be composed of instances of balls denoted by $B_1 := B(\mathbf{0}, r_1)$ and $B_2 := B(\mathbf{0}, r_2)$, respectively.
Substituting from (\ref{eq_Mink0}) in (\ref{eq_Tobs}), and noting the invariance of the balls to reflection and rotation (i.e., $-RB_2 = B_2$) and the commutativity of Minkowski sums, we obtain
\begin{align}
    O^\ast_{S_1, S_2}|_{R} &= (P_1 \oplus B_1) \oplus ((-R P_2) \oplus B_2) \label{eq_ORP_1} \\
    &= (P_1 \oplus (-R P_2)) \oplus (B_1 \oplus B_2) \label{eq_ORP_2}\\
    &= P_O|_{R} \oplus B_O. \label{eq_ORP}
\end{align}
The first term $P_O|_{R} := P_1 \oplus (-R P_2)$ is a finite set of $n_1 n_2$ points in the 3D translation space obtained from pairwise summations of ball centers in $P_1$ and $-R P_2$. It represents the discrete translational obstacle $O_{P_1, P_2}|_{R}$, a 3D slice through $P_O := O_{P_1, P_2} \subset SE(3)$ defining a collection of $n_1 n_2$ curves. The second term $B_O := B_1 \oplus B_2 = B(\mathbf{0}, r_O)$ is a ball of radius $r_O = r_1 + r_2$ in the translational space, representing the `primitive obstacle' $O_{B_1, B_2}|_{R}$, which is a 3D slice of $O_{B_1, B_2} \subset SE(3)$. Therefore, the total obstacle itself is a finite union of balls, i.e., discretized with the same scheme as the original objects: $O^\ast_{S_1, S_2}|_{R} = \Smap_{n_1 n_2} (P_O|_R)$.
An illustration is given in Fig. \ref{figure4} (a--c).
The analytic formulation in (\ref{eq_relg}) develops in parallel as
\begin{align}
    g_{S_1, S_2}|_{R} &= (\rho_{P_1} \ast f_{B_1}) \ast (\rho_{-RP_2} \ast f_{B_2}) \\
    &= (\rho_{P_1} \ast \rho_{-RP_2}) \ast (f_{B_1} \ast f_{B_2}), \\
    &= \rho_{P_O|_{R}} \ast f_{B_O}, \label{eq_convgP}
\end{align}
where $\rho_{P_O|_{R}} := (\rho_{P_1} \ast \rho_{-RP_2})$ is the impulsive density function of the discrete pointset $P_O|_{R}$ made of $n_1n_2$ impulses corresponding to pairwise cross-correlation of Dirac delta terms from the constituents,\footnote{Note that if $\rho_1(\bx) := \delta^3(\bx - \bx_1)$, $\rho_2(\bx) := \delta^3(\bx - \bx_2)$, and $\tilde{\rho}(\bx) := \rho(-\bx)$, then $(\rho_2 \star \rho_1) (\bt) = (\rho_1 \ast \tilde{\rho}_2) (\bt) = \delta^3(\bt - (\bx_1 + \bx_2))$.}
while $f_{B_O} := f_{B_1} \ast f_{B_2}$ is a cross-correlation of two radial bumps of radii $r_1$ and $r_2$, leading to another radial bump of radius $r_O = r_1 + r_2$ that defines the obstacle ball $B_O$.
Note that if we choose $f_{B_1} (\bx) := \psi_\alpha(\|\bx\|_2/r_1)$ and $f_{B_2} (\bx) := \psi_\alpha(\|\bx\|_2/r_2)$ using the form in (\ref{eq_psi}), their convolution does not take the same form, i.e., $f_{B_O} (\bx) \neq \psi_\alpha(\|\bx\|_2/r_O)$. However, the latter can be safely replaced for the last term in (\ref{eq_convgP}) without changing the obstacle $O^\ast_{S_1, S_2}|_{R} = U^\ast_0(g_{S_1, S_2}|_{R})$, since the bump form choice is arbitrary as long as $B_O = U^\ast_0(f_{B_O})$.

\paragraph{\bf Nonequiradius Correlations}
The generalization to $S_1 = \Smap_{n_1}(A_1)$ and $S_2 = \Smap_{n_2}(A_2)$ made of primitive balls of different sizes is not straightforward. This is because the commutativity of the Minkowski sum $\Smap_n(P) = P \oplus B_0$ that led from (\ref{eq_ORP_1}) to (\ref{eq_ORP_2}) does not hold for the product $\Smap_n(A) = A \otimes \gamma_0(B_0)$. In terms of the lifted geometry, this manifests as the observation that the 3D Minkowski sum of the cross-sections is not equal to the cross-section of the 4D Minkowski sum; or in other words, a collision between $K_1 = \Kmap_{n_1}(A_1)$ and $K_2 = \Kmap_{n_2}(A_2)$ does not necessarily imply a collision between the cross-sections $S_1 = \Kmap_{n_1}(A_1)|_{r=0}$ and $S_2 = \Kmap_{n_2}(A_2)|_{r=0}$.
At the primitive level, this is because the 4D half-cones, despite being invariant under 3D rotations and reflections, are {\it not} invariant under 4D reflections, hence $D_0 \neq -D_0$ and the sum $D_0 \oplus (-D_0)$ (that appears in $K_1 \oplus (-RK_2)$) does {\it not} give a half-cone in the $\mathcal{C}-$space. Fortunately, this can be solved by a pre-reflection with respect to the $r = 0$ hyperplane of one of the two lifted shapes. If we let
\begin{equation}
\breve{K} = \{ (\bx, -r) ~|~ (\bx, r) \in K \}, \quad (K, \breve{K} \subset \RRRR)
\end{equation}
denote the $r-$mirror image of the 4D set $K$, the 3D set can be retrieved from both of them as $S = K|_{r=0} = \breve{K}|_{r=0}$. Then the nonequiradius discretization scheme in (\ref{eq_Mink2}) gives $\breve{\Kmap}_n(\breve{A}) = \breve{A} \oplus \breve{D}_0$ and $-\breve{\Kmap}_n(\breve{A}) = (-\breve{A}) \oplus D_0$, noting that $D_0 = -\breve{D}_0$ and the sum $D_0 \oplus (-\breve{D}_0)$ (that appears in $K_1 \oplus (-R\breve{K}_2)$) is a half-cone of double the size in the $\mathcal{C}-$space. Furthermore, it is easy to prove that in this case, for two collections of half-cones of {\it opposite} directions that intersect the $r = 0$ hyperplane, a collision between the 4D solids does in fact imply a collision between the 3D slices:
\begin{lemma} \label{lemma_2}
    $S_1 \cap^\ast (R, \mathbf{t})S_2 \neq \emptyset ~\rightleftharpoons~ K_1 \cap^\ast (R, \mathbf{t})\breve{K}_2 \neq \emptyset$.
\end{lemma}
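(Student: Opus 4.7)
The plan is to reduce the biconditional to the two-primitive case ($n_1 = n_2 = 1$), since both $S_1, (R,\bt)S_2$ and $K_1, (R,\bt)\breve{K}_2$ are finite unions of primitives ($S_\nu = \bigcup_i B(\bx_i^\nu, r_i^\nu)$ and analogously for the 4D cones), so a regularized collision holds for the unions iff it holds for at least one pair of constituents. The null-volume Lemma~1 together with r-set regularity turns $X \cap^\ast Y \neq \emptyset$ into the existence of a shared interior point $\imath X \cap \imath Y \neq \emptyset$, so the claim reduces to a statement about strict inequalities defining a single cone $D_i$ and a single mirrored cone $(R,\bt)\breve{D}_j$.

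For the forward direction, I would suppose the primitive balls overlap at some $\bx_0$, meaning $\|\bx_0 - \bx_i\|_2 < r_i$ and $\|\bx_0 - (R,\bt)\bx_j\|_2 < r_j$, and lift to the 4D point $(\bx_0, 0)$. Unpacking $\imath D_i$ at $r = 0$ gives the requirements $\|\bx_0 - \bx_i\|_2 < r_i - 0$ together with $r_i - L < 0 < r_i$, both of which hold under the assumption $L > \max_i r_i$ from (\ref{eq_L2cone}) and the positivity of radii. Because the motion $(R,\bt)$ fixes the $r$-axis and $\breve{D}_0$ is rotationally symmetric about it, $(R,\bt)\breve{D}_j$ has apex at $((R,\bt)\bx_j, -r_j)$ opening upward, with its $r = 0$ interior slice characterized by $\|\bx_0 - (R,\bt)\bx_j\|_2 < 0 + r_j$, which also holds. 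Hence $(\bx_0, 0)$ witnesses a 4D collision.

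For the reverse direction, I would suppose the 4D sets meet at $(\bx_0, r_0) \in \imath D_i \cap \imath((R,\bt)\breve{D}_j)$, so the strict inequalities
\[
\|\bx_0 - \bx_i\|_2 < r_i - r_0, \qquad \|\bx_0 - (R,\bt)\bx_j\|_2 < r_0 + r_j,
\]
both hold (which in particular force $-r_j < r_0 < r_i$). Adding them and applying the triangle inequality in $\RRR$ yields
\[
\|\bx_i - (R,\bt)\bx_j\|_2 < r_i + r_j,
\]
which is exactly the condition for $B(\bx_i, r_i)$ and $B((R,\bt)\bx_j, r_j)$ to have overlapping interiors. Any point on the open segment joining these two centers with the appropriate weight is then an explicit witness in $\imath S_1 \cap \imath((R,\bt)S_2)$.

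The main obstacle is keeping the cone orientations consistent: the half-cones composing $K_1$ all open apex-up (defining inequality $\|\bx - \bx_i\|_2 \le r_i - r$), while those composing $\breve{K}_2$ all open apex-down (inequality $\|\bx - \bx_j\|_2 \le r + r_j$). Without the $r$-mirror on $K_2$, both families would open in the same direction and a 4D overlap at some $r_0 \neq 0$ would not be guaranteed to project to a 3D overlap at $r = 0$, so the pre-reflection is essential. The algebraic observation that powers the reverse direction is that with opposite orientations the two radii-bounds telescope in $r_0$: $(r_i - r_0) + (r_j + r_0) = r_i + r_j$ eliminates the lifting coordinate entirely, so a 4D collision at any admissible height certifies a 3D collision at the slice $r = 0$.
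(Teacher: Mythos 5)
Your proof is correct and follows essentially the same route as the paper's: reduce the biconditional to a single pair of primitives, then show that an apex-up half-cone and an apex-down (mirrored) half-cone intersect if and only if their $r=0$ slices do. The paper merely asserts this pairwise cone fact as ``very easy to picture,'' whereas you supply the missing algebra---the telescoping $(r_i - r_0) + (r_j + r_0) = r_i + r_j$ combined with the triangle inequality---so your write-up is a faithful, more rigorous rendering of the same argument.
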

\begin{proof}
    The proof is straightforward, by noting that for every pair of 4D cones $D_1 := D(\bx_1, r_1)$ and $D_2 := D(\bx_2, r_2)$ corresponding to $(\bx_1, r_1) \in A_1$ and $(\bx_2, r_2) \in A_2$, respectively, they intersect after flipping one of them upside down (i.e., $D_1 \cap^\ast \breve{D}_2 \neq \emptyset$) if and only if their 3D slices intersect (i.e., $D_1|_{r=0} \cap^\ast \breve{D}_2|_{r=0} \neq \emptyset$). The assertion is very easy to picture for 3D cones whose slices are 2D disks.
\end{proof}
As a direct corollary, we can define a 4D translational $\mathcal{C}-$obstacle that is discretized with the same scheme as
\begin{align}
    O^\ast_{K_1, \breve{K}_2}|_{R} &= (A_1 \oplus D_0) \oplus ((-R \breve{A}_2) \oplus D_0) \\
    &= (A_1 \oplus (-R \breve{A}_2)) \oplus (D_0 \oplus D_0) \\
    &= A_O|_{R} \oplus D_O, \label{eq_ORA}
\end{align}
and the 3D obstacle is a slice $O^\ast_{S_1, S_2}|_{R} = \big[O^\ast_{K_1, \breve{K}_2}|_{R}\big]_{r=0}$.
The first term $A_O|_{R} := A_1 \oplus (-R \breve{A}_2)$ is a finite set of $n_1 n_2$ points in the 4D translation space obtained from pairwise summations of cone apexes in $A_1$ and $-R \breve{A}_2$, which is the same as $O^\ast_{A_1, \breve{A}_2}|_{R}$. In this case, the primitive obstacle $D_O := D_0 \oplus D_0$ is a larger half-cone with a height of $2L$, which is equal to $O^\ast_{D_0, D_0}|_{R}$. Therefore, the $\mathcal{C}-$obstacle is discretized with the same scheme as $O^\ast_{S_1, S_2}|_{R} = \Smap_{n_1 n_2} (A_O|_R) = \big[ \Kmap_{n_1 n_2} (A_O|_R) \big]_{r=0}$.
An illustration is given in Fig. \ref{figure4} (d--f).
The analytic formulation in (\ref{eq_relg}) develops in parallel as
\begin{align}
    g_{K_1, \breve{K}_2}|_{R} &= (\rho_{A_1} \ast f_{D_0}) \ast (\rho_{-R \breve{A}_2} \ast f_{D_0}) \\
    &= (\rho_{A_1} \ast \rho_{-R \breve{A}_2}) \ast (f_{D_0} \ast f_{D_0}) \\
    &= \rho_{A_O|_{R}} \ast f_{D_O}, \label{eq_convgA}
\end{align}
whose restriction to $r = 0$ gives $g_{S_1, S_2}|_{R} = \big[ g_{K_1, \breve{K}_2}|_{R} \big]_{r=0}$.
Similar to the equiradius case, $\rho_{A_O|_{R}} := (\rho_{A_1} \ast \rho_{-R \breve{A}_2})$ is the impulsive density function of the discrete pointset $A_O|_{R}$ made of $n_1n_2$ impulses corresponding to cross-correlation of pairs of shifted Dirac delta functions, while $f_{D_O} := (f_{D_0} \ast f_{D_0})$ is an auto-correlation, which can be arbitrarily modified from the original convolved form to $f_{D_O}(\bt, r) := \psi_\alpha(\|\bt\|_2 / r) \psi_\alpha(1-r/L)$ without changing the obstacle $O^\ast_{S_1, S_2}|_R = U^\ast_0(g_{K_1, \breve{K}_2}|_{r = 0})$.

Although one only needs the $r = 0$ slice to retrieve the $\mathcal{C}-$obstacle, the other slices carry useful information. In fact, any $r \neq 0$ slice corresponding to $r \in (-L, +L)$ gives the obstacle for a pair of {\it offset} 3D solids defined as $r_1-$slice of $K_1$ (i.e., shrinking $S_1$'s primitives by $r_1$) and $r_2-$slice of $\breve{K}_2$ (i.e., expanding $S_2$'s primitives by $r_2$) giving a total offset of $-r = -(r_1 + r_2)$.
These `offset obstacles' can be used, for example, to
incorporate tolerances for machine tooling, guarantee safety margins for path planning, or construct skin layers for protein shape complementarity modeling (Section \ref{sec_app}).

\subsection{Fourier Correlations} \label{sec_CFT}

To take advantage of the convolution theorem for a significantly faster computation of correlation functions, we present an analysis of the gap function in the Fourier domain.
For two r-sets $S_1, S_2 \in \rset$, applying CFT to (\ref{eq_relg}) yields the Fourier correlation function as
\begin{align}
    \hat{g}_{S_1,S_2}|_R = \hat{f}_{S_1} \bar{\hat{f}}_{RS_2} = \hat{f}_{S_1} \left( \bar{\hat{f}}_{S_2} \circ \RT \right), \label{eq_ccghat}
\end{align}
noting that $\hat{f}_{-RS} = \bar{\hat{f}}_{RS}$ for real defining functions, i.e., CFT converts reflection (in both physical and frequency domains) to conjugation in the frequency domain, a property known as Hermitian symmetry (see \ref{app_CFT}). $\hat{g}_{S_1,S_2}|_R = \F \{ g_{S_1,S_2}|_R \}$ is the CFT of $g_{S_1,S_2}(R, \bt)$ only with respect to translation at a fixed $R \in SO(3)$.

For the equiradius discretizations $\Smap_{n_1}(P_1) = P_1 \oplus B_1$ and $\Smap_{n_2}(P_2) = P_2 \oplus B_2$ with Fourier representations $\hat{f}_{\Smap_{n_1}(P_1)} = \hat{\rho}_{P_1} \hat{f}_{B_1}$ and $\hat{f}_{\Smap_{n_2}(P_2)} = \hat{\rho}_{P_2} \hat{f}_{B_2}$, respectively, their Fourier correlation is obtained by substituting (\ref{eq_fballhat}) in (\ref{eq_ccghat}), or directly applying the CFT to (\ref{eq_convgP}) as
\begin{align}
    \hat{g}_{S_1,S_2}|_R &= \big( \hat{\rho}_{P_1} \hat{f}_{B_1} \big) \big( \bar{\hat{\rho}}_{RP_2} \bar{\hat{f}}_{B_2} \big) \\
    &= \big( \hat{\rho}_{P_1} \bar{\hat{\rho}}_{RP_2} \big) \big( \hat{f}_{B_1} \hat{f}_{B_2}\big) = \hat{\rho}_{P_O|_{R}} \hat{f}_{B_O},
\end{align}
noting that $f_{B_{1,2}} = f_{-B_{1,2}}$ hence $\hat{f}_{B_{1,2}} = \bar{\hat{f}}_{B_{1,2}}$ (i.e., are both real-valued), and so is $\hat{f}_{B_O} = \hat{f}_{B_1} \hat{f}_{B_2}$. As expected, $\hat{\rho}_{P_O|_{R}} = \hat{\rho}_{P_1} \bar{\hat{\rho}}_{RP_2}$ is computed from a pointwise multiplication of the NDFTs over the knots $P_1$ and $P_2$.

Analogously, for the nonequiradius discretizations $\Smap_{n_1}(A_1) = \Kmap_{n_1} (A_1)|_{r=0}$ and $\Smap_{n_2}(A_2) = \breve{\Kmap}_{n_2} (\breve{A}_2)|_{r=0}$, where $\Kmap_{n_1}(A_1) = A_1 \oplus D_0$ and $\breve{\Kmap}_{n_2}(\breve{A}_2) = \breve{A}_2 \oplus \breve{D}_0$ with $\hat{f}_{\Kmap_{n_1}(A_1)} = \hat{\rho}_{A_1} \hat{f}_{D_0}$ and $\hat{f}_{\breve{\Kmap}_{n_2}(\breve{A}_2)} = \hat{\rho}_{\breve{A}_2} \hat{f}_{\breve{D}_0}$, respectively, their Fourier correlation is obtained by substituting (\ref{eq_fconehat}) in (\ref{eq_ccghat}), or directly applying CFT to (\ref{eq_convgA}) as
\begin{align}
    \hat{g}_{K_1, \breve{K}_2}|_R &= \big( \hat{\rho}_{A_1} \hat{f}_{D_0} \big) \big( \bar{\hat{\rho}}_{R \breve{A}_2} \bar{\hat{f}}_{\breve{D}_0} \big) \\
    &= \big( \hat{\rho}_{A_1} \bar{\hat{\rho}}_{R \breve{A}_2} \big) \big( \hat{f}_{D_0} \hat{f}_{D_0}\big) = \hat{\rho}_{A_O|_{R}} \hat{f}_{D_O},
\end{align}
where $\hat{f}_{D_O} = \hat{f}_{D_0}^2 = \bar{\hat{f}}_{\breve{D}_0}^2$, noting that $\hat{f}_{D_0} = \bar{\hat{f}}_{\breve{D}_0}$ as a result of the reflective duality $f_{D_0} = f_{-\breve{D}_0}$. In a similar fashion, $\hat{\rho}_{A_O|_{R}} = \hat{\rho}_{A_1} \bar{\hat{\rho}}_{R \breve{A}_2}$ is computed from a pointwise multiplication of the NDFTs over the knots $A_1$ and $\breve{A}_2$.

A critical observation is that the computational implementation relies only on the discrete knots $P$ and $A$ (or $\breve{A}$), expressed in the Fourier domain by the NDFTs $\hat{\rho}_P$ in (\ref{eq_rhoPhat}) and $\hat{\rho}_{A}$ (or $\hat{\rho}_{\breve{A}}$) in (\ref{eq_rhoAhat}), respectively. The continuous geometry is completely embodied by the primitives implicit in $\hat{f}_{B_O} = \hat{f}_{B_1} \hat{f}_{B_2}$ or $\hat{f}_{D_O} = \hat{f}_{D_0}^2 = \bar{\hat{f}}_{\breve{D}_0}^2$. However, despite appearing in equations, they do {\it not} explicitly participate into the numerical algorithms, and that reflects the true power of this particular discretization scheme.

\section{Applications} \label{sec_app}

Next, we show how the tools developed in Sections \ref{sec_samp} and \ref{sec_cor} can be applied to collision detection in \ref{sec_col}, shape complementarity in \ref{sec_comp}, and configuration products in \ref{sec_prod}, and identify future research opportunities in each area.

\subsection{Collision Detection} \label{sec_col}

Analytic collision detection (CD) can be traced to the work by Comba \cite{Comba1968} on convex sets.
Kavraki \cite{Kavraki1995} discovered the interpretation of the translational $\mathcal{C}-$obstacle as a convolution of the objects---the robot and its workspace in the context of path planning \cite{Lozano-Perez1983}---along with the application of the FFT. Both objects $S_1, S_2 \in \rset$ are represented by binary indicators $\indic_{S_1}, \indic_{S_2} : \RRR \rightarrow \{0, 1\}$, discretized as bitmaps, and the integer map of the translational $\mathcal{C}-$obstacle obtained as $g_{S_1, S_2}(\bt) = (\indic_{S_1} \ast \indic_{-S_2})(\bt)$ simply counts the number of grid cells that overlap at a relative translation $\bt \in \RRR$. The algorithm performs two forward FFTs to obtain $\hat{\indic}_{S_1}$ and $\hat{\indic}_{S_2}$, a pairwise multiplication to obtain $\hat{g}_{S_1, S_2} = \hat{\indic}_{S_1} \bar{\hat{\indic}}_{S_2}$, and an inverse FFT to retrieve the obstacle map in $O(m \log m)$ time, where $m$ is the grid size. Although the algorithm is asymptotically optimal to obtain a complete description of the obstacle for all possible translations in a given discretized domain, it is rarely useful for time-critical CD (e.g., in real-time simulations and physically-based modeling \cite{Weller2011}) where a {\it single} configuration is queried.

Lysenko \cite{Lysenko2013} recently generalized the approach by using bump functions to facilitate differentiation, and proposed techniques to enable time-critical CD for a single-configuration query via truncated Fourier expansions, along with an analytic groundwork for early-hit/miss tests. Noting that the inner product structure is preserved by the CFT according to Parseval's theorem (See \ref{app_CFT}), the collision predicate in (\ref{eq_relg}) for a single relative configuration $(R, \bt) \in SE(3)$ can be obtained as
\begin{align}
    g_{S_1,S_2}|_R(\bt) = \Big\langle f_{S_1}, (f_{RS_2} \circ T^{-1}) \Big\rangle = \Big\langle \hat{f}_{S_1}, \hat{\varsigma}_\bt \hat{f}_{RS_2} \Big\rangle, \label{eq_single}
\end{align}
noting that $f_{(R, \bt)S} = f_{RS} \circ T^{-1} = (f_S \circ \RT) \circ T^{-1}$ which transforms to $\hat{f}_{(R, \bt)S} = \hat{\varsigma}_\bt \hat{f}_{RS} = \hat{\varsigma}_\bt (\hat{f}_{S} \circ \RT)$.
As mentioned earlier, $T^{-1} \bx := \bx - \bt$ is the shift function whose Fourier operator $\hat{\varsigma}_\bt(\freq) = e^{-2\pi \ii(\freq \cdot \bt)}$ is the CFT of the shifted Dirac Delta $\varsigma_\bt(\bx) = (\delta^3 \circ T^{-1})(\bx) = \delta^3(\bx - \bt)$.

If a grid-based discretization is used, the rotation can be incorporated by a trilinear interpolation in either domain. Although a brute-force computation of the physical domain inner product over a grid of size $m$ takes $O(m)$---without a simple way of reducing the complexity once the grid resolution is fixed upfront---the frequency domain integral can be computed in $O(m')$ over a truncated grid of much smaller size $m' \ll m$ specified on-the-fly. This provides a mechanism for trading off accuracy with time, by a spiral traversal of the frequency grid starting from the dominant modes until the available time is over. On the other hand, the numerous combinatorial CD methods developed over the years (reviewed in \cite{Jimenez2001}) exploit a variety of data structures to avoid brute-force testing in the physical domain, the likes of which are not available in the frequency domain. The sphere-tree methods \cite{OSullivan1999,Hubbard1996,Bradshaw2004,Weller2011} are among the most efficient, which enable another trade-off mechanism by descending down the tree until the time allocated to CD is consumed. Our framework enables exploiting the existing combinatorial techniques alongside the recent analytic methods in both domains.
The details pertaining to the following are beyond the scope of this article and will be presented elsewhere: 1) {\it early-hit test} by limiting the integration of (\ref{eq_single}) to an intersection with a ball, which is a simple multiplication in the physical domain via (\ref{eq_int}); 2) {\it early-miss test} by offsetting (i.e., Minkowski sum) with a single ball, which is a simple multiplication in the frequency domain; and 3) {\it differentiation} of (\ref{eq_single}) for contact force/torque computation, using pairwise spherical primitive interactions.

\subsection{Shape Complementarity} \label{sec_comp}

Surface shape complementarity (SC) is a predominant determinant of successful binding of protein molecules, and is critical in early-stage lead compound generation for rational drug design. The numerous FFT-based correlation techniques developed over the years (reviewed in \cite{Ritchie2008a}) use the same principles as analytic CD or path planning, except that they quantify SC by overlapping skin-layers. The so-called `double-skin layer' approach \cite{Bajaj2011} integrates the skin-skin intersections to obtain a SC `score function' and subtracts core-core collisions as penalty, which add up to a convolution of `affinity functions' of individual molecules (analogous to defining functions for CD, except with different complex-valued weights for skin/core atoms). Chen and Weng \cite{Chen2003a} described successful heuristics for weight assignment rasterized on a uniform grid along with the use of FFT. Bajaj et al. \cite{Bajaj2011} proposed a faster grid-free method along with the use of NFFT, which has been highly influential in the development of our ideas.

For SC analysis of arbitrary shapes with important applications in assembly automation, packaging and nesting, and path planning in narrow environments, in addition to protein docking, we propose a reformulation of the double-skin layer approach by defining the complex affinity function $F_S: \RRR \rightarrow \mathds{C}$ for an arbitrary r-set $S \in \rset$ as
\begin{align}
    F_S(\bx) :&= \ii \left(f_{S \oplus B_0}(\bx) - \lambda f_S(\bx) \right),
\end{align}
where $f_S, f_{S \oplus B_0} \in C^\infty_0(\RRR)$ are bump functions of the shape and its offset by the ball $B_0 = B(\mathbf{0}, r_0)$. The offset $r_0 > 0$ is decided depending on the feature size (e.g., set to the size of a water molecule ($1.4~\AA$) for protein docking \cite{Bajaj2011}, or to MA-based local/weak feature size \cite{Chazal2004} for other applications), and $\lambda > 0$ defines the `penalty factor'. For two shapes $S_1, S_2 \in \rset$, the cross-correlation of their affinity functions gives the SC score as
\begin{align}
    G_{S_1,S_2} (R, \bt) = (F_{RS_2} \star F_{S_1})(\bt) = (F_{S_1} \ast \bar{F}_{-RS_2})(\bt). \label{eq_score}
\end{align}
Substituting for $F_S$ and noting that $f_{S \oplus B_0} = f_S \ast f_{B_0}$:
\begin{align}
    G_{S_1,S_2}|_R = \lambda^2 g_{S_1,S_2}|_R - 2\lambda &g_{S_1,S_2}|_R \ast f_{B_0} \nonumber \\
    + &g_{S_1,S_2}|_R \ast f_{B_O}, \label{eq_score}
\end{align}
where the terms $g_{S_1,S_2}|_R = f_{RS_2} \star f_{S_1} = f_{S_1} \ast f_{-RS_2}$ and $f_{B_O} = f_{B_0} \ast f_{B_0}$ were studied earlier.
If $\lambda > 1$ is large enough, the first term on the right-hand side of (\ref{eq_score}) takes interior-interior collisions into account with a penalty of $\propto -O(\lambda^2)$, the second term includes offset-interior overlaps with a reward of $\propto +O(\lambda)$, while the third term adds a smaller penalty of $\propto -O(1)$ for offset-offset overlaps.
An important observation is that for spherical sampling, the offsets correspond to an increase of radii in the primitive balls, which in turn corresponds to a 3D slice of {\it elevated} cones in 4D (i.e., translation of the knots $A \in \RRRR$ along the $r-$axis). The different terms in (\ref{eq_score}) thus become 3D slices of $g_{K_1, \breve{K}_2}|_R$ corresponding to $r=0, r_0,$ and $2r_0$, all of which can be cumulatively obtained from a single 4D NDFT. The time-critical query in (\ref{eq_single}) for CD can also be extended to SC formulation, useful in real-time energy computations for interactive assembly or protein docking. More elaboration on these and other possibilities are left to a follow-up publication, including a study of {\it offset sensitivity} defined as differentiation of $g_{K_1, \breve{K}_2}|_R$ with respect to $r$ and its implications for SC analysis---think of the infinitesimal overlaps in (\ref{eq_score}) when $\lambda = 1$ and $r_0 \rightarrow 0^+$.

\subsection{Morphological Operations} \label{sec_prod}

Roerdink \cite{Roerdink2000} generalized the concept of Minkowski sums/differences to Minkowski products/quotients over general groups, whose noncommutative convolutional formulation was presented by Lysenko et al. \cite{Lysenko2011a}. Nelaturi and Shapiro \cite{Nelaturi2011} applied the concept to $SE(3)$ (in this context referred to as `configuration products/quotients') and showed its applicability to direct and inverse $\mathcal{C}-$space problems ranging from computing general sweeps to solving for maximal shapes and motions subject to containment constraints. The method embeds the solids in $SE(3)$, and uses a uniform sampling over translations and rotations followed by pairwise matrix multiplications across the two samples to compute the $\mathcal{C}-$products and quotients. Our discretization scheme can readily be applied to more efficiently sample the translation space for different rotational sections through the 6D domain, as will be demonstrated in Section \ref{sec_res1}.

An interesting extension of the method would be to formulate 6D spherical sampling of the subsets of the Riemannian manifold $SE(3)$ based on geodesic distances (see \ref{app_SE3} for more details).
One possible application is in machine tool path planning in the presence of tolerances \cite{Behandish2015d}, where the embedded workpiece complements the configuration product of the motion trajectory and tool profile. The tolerances can be introduced into either set by Minkowski operations between the `nominal' geometry and primitive tolerance sets, e.g., Euclidean (for translational tolerances) and geodesic (for rotational tolerances) disks, cylinders, balls, or tori, all of which can be more efficiently discretized via spherical primitives than uniform samples. In a similar fashion to the constructions in Section \ref{sec_cross}, the tool's swept volume is then given by a 3D projection of the 6D Minkowski product of the embedded tool profile and its motion, each of which are described by a Minkowski product of sample points on their nominal sets with the primitive tolerance sets. Rearranging the terms (similar to (\ref{eq_ORP_1}) through (\ref{eq_ORP})) abstracts the tolerances away into a 6D configuration space tolerance set, and allows working with lower-dimensional nominal sets only.
Unfortunately, the corresponding Fourier analysis in this case becomes quite tedious, whose potential benefits are unclear at this stage.
See the example in \cite{Behandish2015d} for an elaborate discussion.

A more important research question remains regarding the extension of spherical sampling to dual operations, i.e., Minkoswki differences or quotients, which would open up the opportunity to extend the benefits of this approach to inverse problems in configuration modeling.

\begin{figure}
    \centering
    \includegraphics[width=0.48\textwidth]{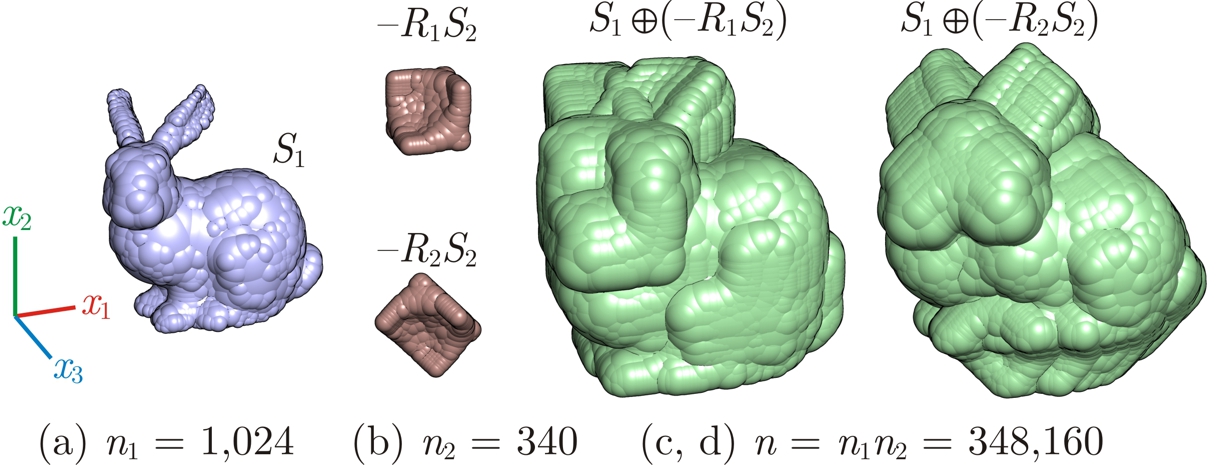}
    \caption{The Minkowski sum of two spherical samples (a, b) for two sampled rotations, which are sections through the 6D Minkowski product, obtained from pairwise Minkowski sum of primitives (c, d).} \label{figure5}
\end{figure}

\section{Numerical Results} \label{sec_res}

In this section we demonstrate how spherical sampling outperforms uniform sampling for Minkowski computations that are central to the range of applications discussed in Section \ref{sec_app}, and validate the additional performance improvement by using FFT algorithms. We implemented the method as a C++ API that reads triangular meshes, generates spherical decompositions (using Algorithm \ref{alg_sampling} in \ref{app_sampling}), converts the geometry to an analytic representation in the physical and/or frequency domains, and computes the correlations in either domain. We report on both CPU- and GPU-parallel computing, implemented using C++ Boost \cite{Schling2011} and CUDA-C libraries, respectively. Our numerical experiments were conducted on a desktop computer with Intel Xeon E5-2687W CPU (32 cores, 3.10 GHz clock-rate, 64GB host memory) and NVIDIA Tesla K20c GPU (2,496 CUDA cores, 5GB device memory).

\begin{figure}
    \centering
    \includegraphics[width=0.48\textwidth]{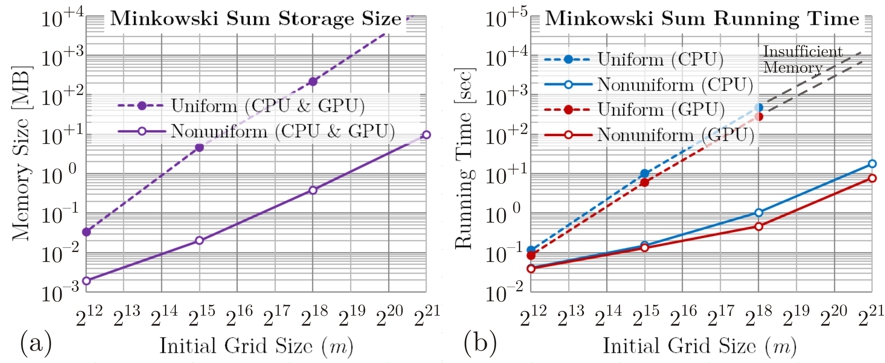}
    \caption{Nonuniform spherical sampling significantly outperforms uniform sampling, by efficient use of memory and time resources.} \label{figure6}
\end{figure}
\begin{table}
    \caption{Comparison of the sample size between grid-based uniform and gird-free spherical samplings. The ratio scales rapidly with size.} \label{tab_comp}
    \vspace{-0.2cm}
    \scalebox{0.72}{
    \begin{tabular}{| c | r  r  r | r  r  r | r |}
    \hline
    & \multicolumn{3}{|c|}{Uniform Sampling} & \multicolumn{3}{|c|}{Spherical Sampling} & {Ratio} \\
    \hline
    $m$ & $n_1'$ & $n_2'$ & $n' = n_1' n_2'$ & $n_1$ & $n_2$ & $n = n_1 n_2$ & $n'/n$\\
    \hline
    $2^{12}$ & $666$ & $44$ & $29,304$ & $49$ & $26$ & $1,274$ & $23.0$\\
    $2^{15}$ & $5,921$ & $689$ & $4.08 \times 10^6$ & $159$ & $83$ & $13,197$ & $309.1$\\
    $2^{18}$ & $49,981$ & $3,867$ & $1.93 \times 10^8$ & $1,024$ & $340$ & $3.48 \times 10^5$ & $764.5$\\
    $2^{21}$ & $409,058$ & $36,874$ & $1.5 \times 10^{10}$ & $3,686$ & $1,081$ & $3.98 \times 10^6$ & $3,785.5$\\
    \hline
    \end{tabular}
    }
\end{table}
\begin{figure*}
    \centering
    \includegraphics[width=\textwidth]{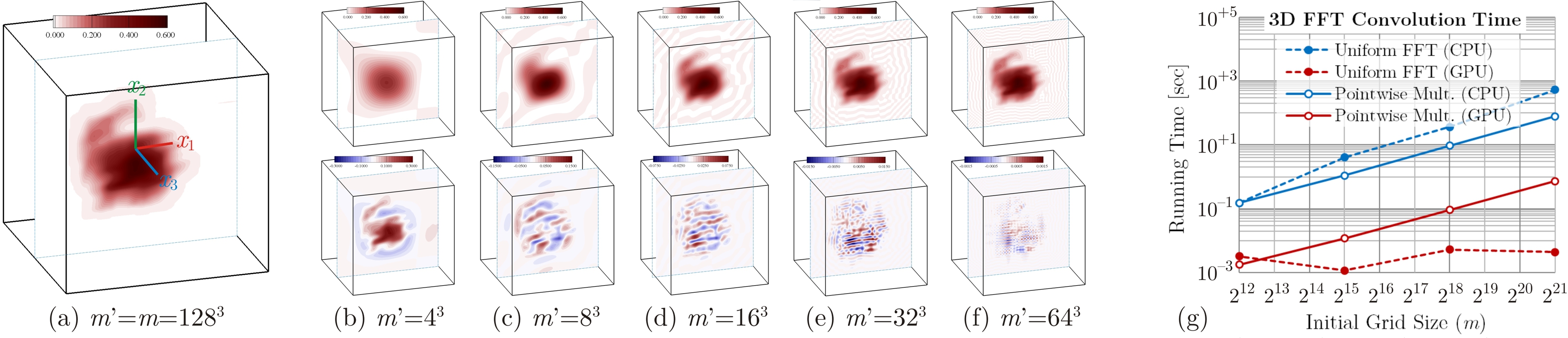}
    \caption{A section through the gap function representation (a) of the Minkowski sum in Fig. \ref{figure5}; its approximations (b--f) with truncated Fourier expansions (top), and the residual error (bottom). Uniform grid-based FFT implementation of the convolution outperforms the pairwise primitive multiplication method by two orders of magnitude (g).} \label{figure7}
\end{figure*}

\subsection{Combinatorial Advantage} \label{sec_res1}

Figure \ref{figure5} illustrates the $\mathcal{C}-$obstacle construction discussed in Section \ref{sec_cross} for a pair of objects discretized via nonequiradius spherical sampling developed in Section \ref{sec_count}. We repeat Algorithm \ref{alg_sampling} with different grid sizes $m$, and compare the arithmetic complexity of the result with that of the uniform sampling of the same grid dimensions used in \cite{Nelaturi2011}. As described in \ref{app_sampling}, this guarantees that the Hausdorff metric-based approximation error of the spherical discretization is upperbounded by that of the uniform sample, which is $\epsilon = \sqrt{3} (L/m^{\frac{1}{3}})$. Therefore, the `initial grid size' $m$ will be used as a measure of resolution for both methods for the purpose of comparison.

As reported in Table \ref{tab_comp}, our method offers a clear advantage, decreasing the complexity by several orders of magnitude. Figure \ref{figure6} (a, b) plots the memory requirement and running times, respectively, for Minkowski sum computation (viewed as a translational cross-section of the configuration product, as described in Section \ref{sec_prod}) by pairwise summations in the nonuniform 4D sample space, compared to pairwise summations in the uniform 3D sample space used in \cite{Nelaturi2011}. The speed-ups of our method scale significantly with resolution, and reaches the range $400$--$600\times$ (on both CPU and GPU)\footnote{In each individual scenario, the GPU runs are only slightly faster than CPU runs ($1$--$3 \times$) due to extensive global memory references, but can be improved in future versions by memory optimization.}
for a grid size of $m := 2^{18} =$ 262,144, decreasing the CPU/GPU running time from $28/48$ seconds to $104/47$ milliseconds. For larger initial grid sizes such as $m := 2^{21} =$ 2,097,152, the memory cannot accommodate the Minkowski sum of uniform samples, while our method succeeds and carries out the sum in less than a second.

\subsection{Analytic Advantage} \label{sec_res2}

We next consider the computational performance of the analytic method, using both uniform and nonuniform sampling. Given the spherical decomposition of the two solids, their bump functions as a sum of radial kernels can be rasterized on uniform grids in the physical domain. The gap function representation of the Minkowski sum can then be computed by two forward FFTs, a pointwise multiplication over the frequency grids, and an inverse FFT to retrieve the result, whose running times are separately plotted in Fig. \ref{figure7} (g).
The GPU implementation in this case offers significant speed-ups of $400$--$800\times$ over its CPU counterpart.\footnote{The FFT is implemented using FFTW \cite{Frigo2005} on the CPU and using cuFFT(W) on the GPU. With the exception of FFTW, all other CPU and GPU routines were written in parallel.}
Comparing the results with Fig. \ref{figure6} (b) shows an improvement of $30$--$80\times$ over the pairwise computations in the physical domain. For a grid size of $m := 2^{21} =$ 2,097,152, accurate computation of the convolution takes less than $80$ milliseconds on the GPU.

It appears that the pointwise multiplication step is the bottleneck in the FFT-based convolutions.
However, by performing this step (and the following inverse FFT) over a small subset of size $m' \ll m$ of the frequency grid in the neighborhood of the dominant modes, one could decide on the amount of computation time to spend in a trade-off with accuracy depicted in Fig. \ref{figure7} (a--f). It is clear that small gap function errors do not necessarily imply small geometric discrepancies of the $0-$sublevel set in terms of Haudorff metric. However, it was shown by Lysenko \cite{Lysenko2013} that it is also possible to impose upperbounds on the Hausdorff distance-based error as a function of the number of retained frequencies.

As depicted in Sections \ref{sec_Fourier} and \ref{sec_CFT}, the uniform 3D grid-based FFT can be replaced with a nonuniform 4D grid-free NFFT. As the difference between the number of sample points in each method grows according to Table \ref{tab_comp}, even a cascade 4D NDFT over the spherical discretization can be faster, with the additional flexibility it offers in choosing the frequency domain grid size on-the-fly independently of the physical domain sample size. The NDFT does not require the additional step of bump function rasterization over the uniform grid, which is basically a cascade computation of the convolution of the knots and the conical kernel in (\ref{eq_coneconv}). It rather incorporates that step as a pointwise multiplication with the kernel's frequency domain representation in (\ref{eq_fconehat}) which can be precomputed to full precision. The comparison is shown in Fig. \ref{figure8} for different number of modes $m'$ over the 4D frequency grid, which demonstrates an advantage to NDFT for $m > 2^{15} =$ 32,768 and $m' < 2^{16} =$ 65,536 on the CPU, and for $m > 2^{18} =$ 262,144 and $m' < 2^{12} =$ 4,096 on the GPU. This can be further improved using the optimal NFFT \cite{Potts2001}. Unfortunately, NFFTs have been implemented on the GPU \cite{Kunis2012} for 1, 2, and 3D, while at present the 4D NFFT is available only on the CPU \cite{Keiner2009}. 

\begin{figure}
    \centering
    \includegraphics[width=0.48\textwidth]{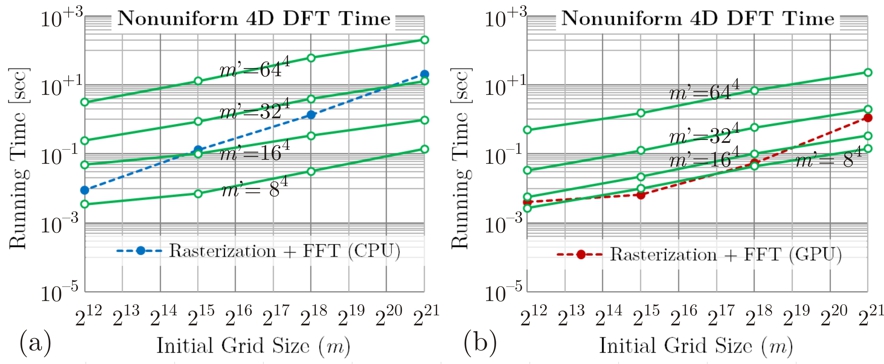}
    \caption{A comparison of 1) bump function rasterization (i.e., cascade convolution of knots with conical kernels) + 3D FFT; and 2) 4D NDFT + pointwise multiplication with kernel's frequency domain representation, for different numbers of needed frequencies $m' \leq m$.} \label{figure8}
\end{figure}

Lastly, we test the performance for time-critical computation of the correlation predicate for a single configuration, using the method presented in Section \ref{sec_col} based on \cite{Lysenko2013} via truncated frequency grid integration in (\ref{eq_single}). Figure \ref{figure9} shows sequential integration time on the CPU for different choices of the number of retained modes $m' \leq m$. An almost linear speed-up of $m/m'$ (as expected from the theory) is achieved, and the collision predicate is computed in less than a millisecond for $m' < 2^{12} =$ 4,096. This enables fast physically-based modeling and multibody dynamics simulations in real-time applications that require a refresh rate of $1$ kHz for graphics and haptics feedback \cite{Weller2011}.
An important research question concerns the development of a {\it hybrid} method that further improves the performance by using this method alongside a sphere-tree traversal used in \cite{OSullivan1999,Hubbard1996,Bradshaw2004,Weller2011}, and limits the integration over the NDFT of fewer primitives at the tree leaves.

\begin{figure}
    \centering
    \includegraphics[width=0.48\textwidth]{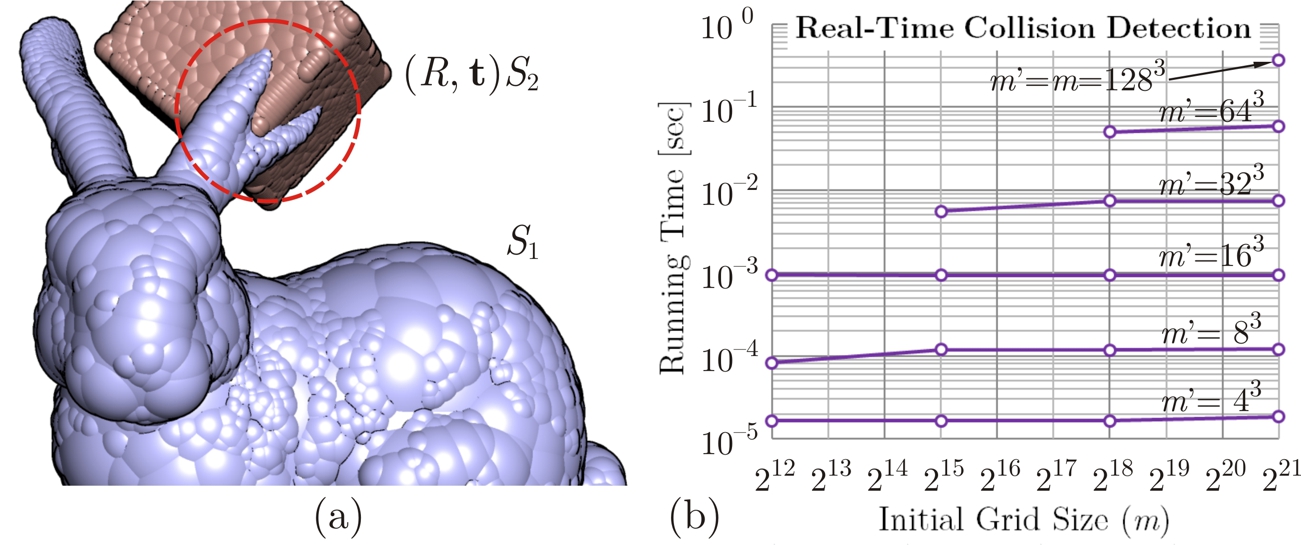}
    \caption{Time-critical collision detection for a single configuration query by integrating over the $m' \leq m$ dominant modes in the frequency domain. The sub-millisecond region is considered `real-time'.} \label{figure9}
\end{figure}

\section{Conclusion}

Analytic methods for geometric modeling have shown great promise in formulating and solving important problems in terms of convolutions, typically implemented by uniform grid-based sampling to enable the use of the efficient FFT. We proposed a versatile analytic paradigm built around a nonuniform discretization scheme that approximates the shape with a union of 3D balls, equivalent to a nonuniform sample of 4D knots in an elevated space. We showed that this can be conceptualized as a 4D Minkowski sum, analytically represented by a convolution of the knots with a conical kernel. As a result of the spherical symmetry, the discretization structure is closed under rigid motions and Minkowski sums, hence the ball/cone geometry and its defining kernel abstract the continuous geometry away in a consistent manner across configuration space correlations, allowing the numerical algorithms to operate on the discrete set alone.

Among the important applications we referred to collision detection, shape complementarity, and morphological operations, which are central to important problems in motion/path planning, physically-based modeling, manufacturing automation, protein docking, and more. We showed that the spherical discretization allows for an efficient allocation of time and memory resources to capturing the details of geometric features, by rapidly filling the large interior regions of both shape and configuration pointsets. The CPU and GPU implementations were presented, demonstrating speed-ups up to several orders of magnitude.

This research opens up promising theoretical and computational directions for future studies on analytic methods that are emerging in solid and physical modeling.

\section{Acknowledgement}

The authors would like to thank Saigopal Nelaturi from Palo Alto Research Center
for the helpful discussions and constructive feedback.
This work was supported in part by the National Science Foundation grants CMMI-1200089, CMMI-0927105, and CNS-0927105.
The responsibility for any errors and omissions lies solely with the authors.

\section{References}

\bibliographystyle{elsarticle-harv}
\bibliography{CDL-TR-15-07}

\appendix

\section{Sampling Algorithm} \label{app_sampling}

Although the main focus of this article is on how to work with given spherical decompositions in the analytic realm regardless of the method used for their generation \cite{OSullivan1999,Hubbard1996,Bradshaw2004,Weller2011}, a `good' spherical sampling algorithm is essential to gain practical advantage over uniform sampling.
Here we present the details of our new algorithm used to generate the decomposition in Table \ref{tab_comp} and Figs. \ref{figure5} and \ref{figure6} of Section \ref{sec_res}, along with its topological properties and geometric error bounds (Propositions \ref{prop_1} to \ref{prop_3}). We compare our algorithm, in terms of these qualitative properties as well as arithmetic complexity, to the sphere-packing algorithm by Weller and Zachmann \cite{Weller2011} (hereon abbreviated as {\sf W\&Z}), which is considered state-of-the-art for collision detection and proximity queries, and is successfully applied to real-time physically-based modeling and virtual reality (graphics/haptics) \cite{Weller2011}.

The following lemma from \cite{Lysenko2013} will be central to put the approximation error bounds into perspective:
\begin{lemma} \label{lemma_1}
    Given arbitrary sets $X_1, X_2 \subset \RRR$ and $\epsilon \geq 0$,
    \begin{equation*}
        \left.
        \begin{array}{l}
            X_1 \subseteq X_2 \oplus B(\mathbf{0}, \epsilon)\\
            X_2 \subseteq X_1 \oplus B(\mathbf{0}, \epsilon)
        \end{array} \right\} \rightleftharpoons d_H(X_1, X_2),
    \end{equation*}
    %
    %
    %
    {\rm where $d_H(X_1, X_2)$ is the symmetrical (i.e., the maximum of left and right) Hausdorff $L^2-$metric.}\footnote{$d_H(X_1, X_2) = \max \left\{ \sup_{\bx \in X_1} d(\bx, X_2), \sup_{\bx \in X_2} d(\bx, X_1) \right\}$.}
\end{lemma}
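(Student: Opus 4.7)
The plan is to reduce each of the two set-inclusion conditions to one of the two directed Hausdorff distances, and then combine them via the max in the footnote. First I would unpack the Minkowski sum: by definition a point $\bx$ lies in $X_2 \oplus B(\mathbf{0}, \epsilon)$ if and only if there exists $\mathbf{y} \in X_2$ with $\|\bx - \mathbf{y}\|_2 \leq \epsilon$. Therefore $X_1 \subseteq X_2 \oplus B(\mathbf{0}, \epsilon)$ is equivalent to: for every $\bx \in X_1$ some $\mathbf{y} \in X_2$ realizes $\|\bx - \mathbf{y}\|_2 \leq \epsilon$, which in turn is equivalent to $d(\bx, X_2) = \inf_{\mathbf{y} \in X_2} \|\bx - \mathbf{y}\|_2 \leq \epsilon$ for every $\bx \in X_1$, i.e., $\sup_{\bx \in X_1} d(\bx, X_2) \leq \epsilon$ (the sup of a uniformly bounded family is bounded by the same constant, and conversely a bounded sup bounds each member).

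An identical argument with the roles of $X_1$ and $X_2$ swapped shows that $X_2 \subseteq X_1 \oplus B(\mathbf{0}, \epsilon)$ is equivalent to the right directed distance $\sup_{\mathbf{y} \in X_2} d(\mathbf{y}, X_1) \leq \epsilon$. Combining the two via $d_H(X_1, X_2) = \max\{\sup_{\bx \in X_1} d(\bx, X_2),\, \sup_{\mathbf{y} \in X_2} d(\mathbf{y}, X_1)\}$ from the footnote then yields the biconditional with $d_H(X_1, X_2) \leq \epsilon$, as intended in the statement.

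I do not expect any real obstacle; the proof is essentially a chain of definitional rewrites. The only mild subtlety is the step ``$d(\bx, X_2) \leq \epsilon \Rightarrow \exists \mathbf{y} \in X_2 \text{ with } \|\bx - \mathbf{y}\|_2 \leq \epsilon$'' when the infimum is not attained; this is a nonissue in the present setting because the r-sets and their finite/countable union-of-balls discretizations considered throughout the paper are closed, so the infimum is realized and the closed ball $B(\mathbf{0}, \epsilon)$ supplies the required witness. If one genuinely wanted to allow arbitrary non-closed $X_1, X_2$, the same equivalence holds after passing to closures, which does not change either directed supremum.
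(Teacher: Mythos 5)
Your proof is correct, and there is nothing in the paper to compare it against: the lemma is quoted from Lysenko (2013) and the paper gives no proof of it, so your definitional unwinding---each inclusion is equivalent to one directed Hausdorff distance being at most $\epsilon$, and the max in the footnote combines the two---is the natural and standard argument. The one point you flag is indeed the only delicate one, and you resolve it correctly: for truly arbitrary (non-closed) sets the backward implication fails, e.g.\ with $X_2 = (0,1) \subset \R$ and the point $-\epsilon$ one has $d(-\epsilon, X_2) = \epsilon$ yet no $\mathbf{y} \in X_2$ lies within distance $\epsilon$, so $-\epsilon \notin X_2 \oplus B(\mathbf{0},\epsilon)$; hence the lemma as literally stated (``arbitrary sets'') requires either closedness or a passage to closures, exactly as you observe. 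In every use the paper makes of the lemma (r-sets, grid samples $X = G \cap S$, finite unions of closed balls $\Smap_n(A)$) the sets are closed and the infimum $d(\bx, X_2)$ is attained by compactness of closed bounded sets in $\RRR$, so the witness exists. You also correctly supply the ``$\leq \epsilon$'' that is evidently missing from the right-hand side of the displayed equivalence in the statement.
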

\begin{figure}
    \centering
    \includegraphics[width=0.48\textwidth]{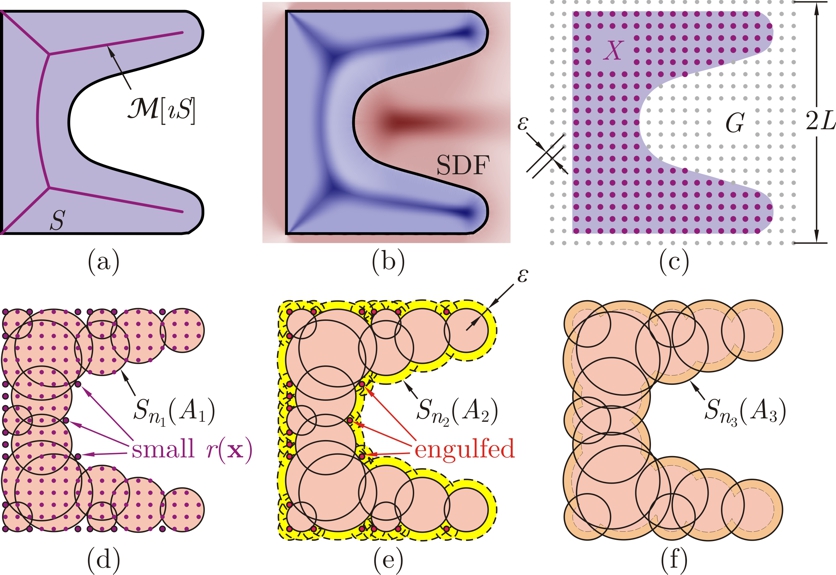}
    \caption{The various geometric constructs in Algorithm \ref{alg_sampling}. It precomputes SDF on a grid $G$ (a--c), then generates balls centered at grid nodes $X = G \cap S$ in descending SDF order to obtain $A_1$ (d), expands the balls by $\epsilon$ to obtain $A_2 = A_1 + (\mathbf{0}, \epsilon)$ (e), and cleans up the engulfed balls to obtain $A_3 \subset A_2$ (f).} \label{figure10}
\end{figure}
\begin{algorithm}
    \SetKwInput{KwInput}{Input}
    \SetKwInput{KwOutput}{Output}
    \KwInput{$S \in \rset$ (represented with any scheme that supports distance and inclusion queries), $G \subset \RRR$, $L > 0$, and $0 \leq \mu \leq 1$;}
    \KwOutput{$A_1, A_2, A_3 \subset \RRRR$;}
    \vspace{-0.1in}
    \hrulefill\\
    {\bf Step 0: Precomputations:}\\
    Constant $\epsilon \leftarrow \sqrt{3} (L/|G|^{\frac{1}{3}})$\;
    Using the inclusion test, separate $X = G \cap S$\;
    Compute the distance function over all $X$\;
    Compute and sort$^\dagger$ the SDF \cite{Behandish2014b} over all $X$\;
    {\bf Step 1: Decomposition:}\\
    Initialize $Y \leftarrow \emptyset$; $A_1 \leftarrow \emptyset$\;
    \While{$|Y| < |X|$ {\rm (implies $Y \subset X$ but $Y \neq X$)}}
    {
        Select$^\dagger$ $\bx \in X$ with the maximal SDF\;
        Modify $A_1 \leftarrow (A_1 \cup \{ (\bx, r(\bx)) \})$; //sampled\\
        \For{$\bx' \in X$}
        {
            \If{$\bx' \in B(\bx, r(\bx))$}
            {
                Modify$^\ast$ $Y \leftarrow (Y \cup \{\bx'\})$; //covered\\
            }
            \If{$\| \bx - \bx' \|_2 - |r(\bx) - r(\bx')| \leq \mu r(\bx')$}
            {
                Modify$^\dagger$ SDF$[\bx''] \leftarrow -\infty$; //popped\\
            }
        }
    }
    {\bf Step 2: Expansion:}\\
    Initialize $A_2 \leftarrow \emptyset$\;
    \For{$(\bx_1, r_1) \in A_1$}
    {
        $A_2 \leftarrow (A_2 \cup \{ (\bx_1, r_1 + \epsilon) \} )$; //expanded\\
    }
    {\bf Step 3: Reduction:}\\
    Initialize $A_3 \leftarrow A_2$\;
    \For{$(\bx_1, r_1) \in A_1$}
    {
        \For{$(\bx_2, r_2) \in A_2$}
        {
            \If{$\| \bx_2 - \bx_1 \|_2 \leq (r_2 - r_1)$}
            {
                $A_3 \leftarrow (A_3 - \{ (\bx_1, r_1 + \epsilon) \} )$; //popped\\
                {\bf break for}\;
            }
        }
    }
    {\footnotesize //Implemented via a priority queue ($^\dagger$) and a binary array ($^\ast$).}
    \caption{The SDF-based spherical sampling algorithm $\text{\sf B\&I}: (S, G; L, \mu) \rightarrow (A_1, A_2, A_3)$.} \label{alg_sampling}
\end{algorithm}

To approximate an arbitrary r-set $S \in \rset$ with a finite union of $n$ (possibly overlapping) nonequiradius spherical balls $\Smap_n(A) = \bigcup_{1 \leq i \leq n} B(\bx_i, r_i)$, where the balls are encoded by $A = \{ (\bx_i, r_i) \}_{1 \leq i \leq n} \subset \RRRR$, we use a greedy algorithm similar to {\sf W\&Z} \cite{Weller2011}. One of the main differences is that {\sf W\&Z} uses the distance field as the greedy criterion, while our algorithm employs the SDF \cite{Behandish2014b} to create superior decompositions similar to the outcomes of MA-based methods \cite{Hubbard1996,Bradshaw2004} without the need to explicitly compute the numerically unstable MA/MAT. The SDF is a real-valued function that can be thought of as a `dissipated' continuous extension of the indicator function of the MA (Fig. \ref{figure10} (b)), whose ridges constitute the interior regions with extensive approximate nearest neighbors on the boundary $\partial S$, and is computed directly from the distance field \cite{Behandish2014b}.

\paragraph{\bf Step 0: Precomputations}
First, the algorithm precomputes the distance function and SDF over a uniform grid of $m$ nodes sampled to cover the r-set's bounding volume. Without loss of generality, we assume a bounding box of edge length $2L > 0$ and a grid resolution of $2L/m^{\frac{1}{3}}$. Let $G = (2L / m^{\frac{1}{3}}) \mathds{Z}^3 \cap [-L, +L)^3$ with $|G| = m$ be the collection of grid nodes, and $X = G \cap S$ be the subset of the nodes that encodes a bitmap approximation of the r-set. The sample $X$ is said to be an {\it $\epsilon-$covering} of $S$ where $\epsilon = \sqrt{3} (L/m^{\frac{1}{3}})$ is the diagonal half-length of a grid cell, since for every point $\bx' \in S$ there exists a node $\bx \in X$ such that $\|\bx - \bx'\|_2 \leq \epsilon$ \cite{Nelaturi2011}, which implies
\begin{lemma}
    $d_H(X, S) \leq \epsilon$, where $X = G \cap S$.
\end{lemma}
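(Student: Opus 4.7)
The plan is to unfold the symmetric Hausdorff distance into its two one-sided pieces and bound each separately. Writing $d_H(X, S) = \max\{\sup_{\bx \in X} d(\bx, S),\, \sup_{\bx' \in S} d(\bx', X)\}$, the first term vanishes immediately, since $X = G \cap S \subseteq S$ means every $\bx \in X$ satisfies $d(\bx, S) = 0$. The entire bound therefore reduces to the covering estimate $\sup_{\bx' \in S} d(\bx', X) \leq \epsilon$.

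For that estimate I would fix an arbitrary $\bx' \in S$. Because $S \subseteq [-L,+L]^3$ by the bounding-box assumption, $\bx'$ belongs to exactly one axis-aligned grid cell of edge length $h = 2L/m^{\frac{1}{3}}$, whose eight corners are grid nodes in $G$. A standard diagonal argument for an $h$-cube shows that any point of such a cube lies within the half-diagonal $h\sqrt{3}/2 = \sqrt{3}(L/m^{\frac{1}{3}}) = \epsilon$ of its nearest corner $\bx \in G$, so $\|\bx - \bx'\|_2 \leq \epsilon$. Combining with the trivial direction would give $d_H(X, S) \leq \max\{0, \epsilon\} = \epsilon$ as soon as this nearest corner is known to lie in $X$ rather than merely in $G$.

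The delicate step, and the one I expect to be the main obstacle, is exactly this promotion from $G$ to $X = G \cap S$, i.e., certifying that the near corner actually lies in $S$ rather than in the exterior $cS$. For $\bx'$ deep inside the open interior, this is automatic because every corner of the containing cell is in $S$; but for $\bx'$ near $\partial S$ one must rule out cells whose intersection with $S$ misses every corner, in which case the nearest grid node within $\epsilon$ could sit on the wrong side of the boundary. The standard remedy, implicit in the Nelaturi--Shapiro covering cited in the excerpt, is to exploit the 3D homogeneity of the r-set together with the assumption that $h$ is smaller than the local feature size of $S$ (guaranteed by finite describability for sufficiently large $m$): among the eight corners of the containing cell, the one in the same octant as the inward side of $\partial S$ at $\bx'$ then lies in $S$ and still satisfies the $\epsilon$ half-diagonal bound. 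Feeding this back into the max yields the claimed $d_H(X, S) \leq \epsilon$.
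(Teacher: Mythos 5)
Your proof follows essentially the same route as the paper's: one direction is immediate from $X = G \cap S \subseteq S$, and the other reduces to the $\epsilon$-covering property of the in-solid grid nodes; the only cosmetic difference is that the paper channels both inclusions through its Lemma on the offset characterization of the Hausdorff metric ($X_1 \subseteq X_2 \oplus B(\mathbf{0},\epsilon)$ and $X_2 \subseteq X_1 \oplus B(\mathbf{0},\epsilon)$) rather than unfolding the sup-definition of $d_H$ directly, which is equivalent.

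Where you diverge is in trying to actually establish the covering estimate, and you correctly locate the weak point: the half-diagonal argument yields a nearby node of $G$, not of $X$. The paper does not confront this at all---it simply asserts the covering property with a citation immediately before the lemma and uses it as a premise. However, your proposed repair does not hold up. If $\bx' \in S$ lies at distance $\delta < (1-\sqrt{3}/2)h$ from a cell corner $\bx$ with $\bx \notin S$ (where $h = 2L/m^{1/3}$ is the grid spacing), then $\bx$ is the \emph{only} grid node within $\epsilon = \sqrt{3}h/2 < h$ of $\bx'$; every other node of $G$, in this cell or any adjacent one, is at distance at least about $h - \delta > \epsilon$. So no choice of a corner ``on the inward side'' can satisfy the half-diagonal bound, and the covering property genuinely fails for, e.g., a thin feature of $S$ grazing a grid node from outside. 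The estimate therefore requires an explicit hypothesis relating the grid resolution to the geometry of $S$ near $\partial S$ (that the boundary never separates a point of $S$ from all grid nodes within $\epsilon$ of it); neither the paper's citation nor your octant argument supplies a proof of it, so the honest fix is to state it as an assumption, which is what the paper implicitly does.
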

\begin{proof}
    By definition $X \subset S \subset S \oplus B(\mathbf{0}, \epsilon)$. Additionally, $\|\bx - \bx'\|_2 \leq \epsilon$ for all $\bx \in X$ and $\bx' \in S$, hence $S \subset X \oplus B(\mathbf{0}, \epsilon)$. Applying Lemma \ref{lemma_1} yields the result. The equality holds (i.e., $d_H(X, S) = \epsilon$) unless none of the grid cell centers in between the nodes belongs to $S$.
\end{proof}
We will show that the same error bound holds for the spherical decomposition, which forms a basis of comparison in Section \ref{sec_res} between the uniform sample and spherical sample, in terms of their deviation from the original solid measured by $\epsilon$ (or equivalently, $m$).

\paragraph{\bf Step 1: Decomposition}
The algorithm sorts $X$ with respect to the SDF values and arranges the nodes into a priority queue, followed by $n$ rounds of selection. At each round, the grid node $\bx \in X$ with maximal SDF is selected, and a ball $B(\bx, r(\bx))$ is generated, i.e., the $4-$tuple $(\bx, r(\bx))$ is added to the (initially empty) output set $A_1$, where $r(\bx) = d(\bx, \partial S) = \min_{\bx' \in \partial S} \| \bx - \bx' \|_2$ is the distance function.
A subset of the grid points inside this new ball are identified with the following condition, and eliminated from the SDF-sorted queue for the next rounds:
\begin{equation}
    Q(\bx; \mu) = \left\{ \bx' \in X ~|~ \| \bx - \bx' \|_2 - |r(\bx) - r(\bx')| \leq \mu r(\bx') \right\}.
\end{equation}
The left-hand side of the inequality gives a measure of how $B(\bx', r(\bx'))$ protrudes outside $B(\bx, r(\bx))$, which is normalized by the smaller radius $r(\bx')$ and compared to the `protrusion factor' $0 \leq \mu \leq 1$.
On the one end, $Q(\bx; 0)$ includes only the grid nodes $\bx' \in X \cap B(\bx, \bx(r))$ with $B(\bx', r(\bx')) \subset B(\bx, r(\bx))$, i.e., nodes that would contribute redundant balls already covered. On the other end, $Q(\bx; 1)$ includes the entire $X \cap B(\bx, r(\bx))$ and eliminates all covered points from the future rounds. The former gives a more conservative coverage and creates better approximations to the original shape in terms of topology (e.g., fewer void spaces) and geometry (e.g., similar surface curvature) in practice. However, it generates a larger number of balls than the latter---especially over the thinner features of the shape---thus takes longer to finish. More importantly, a too small choice (e.g., $\mu < 0.1$) might defeat the advantage of spherical sampling with $|A_1| = n_1$ over uniform sampling with $|X| = O(m)$ reliant on the assumption $n_1 \ll m$.
Our experiments suggest a choice of $\mu = 0.25$--$0.30$ to be a good trade-off (see Table \ref{tab_sampling}).

At each round of the algorithm, all $\bx' \in B(\bx, r(\bx))$ are marked by Boolean flags that indicate their containment in at least one ball. The algorithm repeats this process until all points in the uniform sample $X$ are covered by the balls, i.e., until $Y = X$ where $Y = G \cap \Smap_{n_1}(A_1)$ is the set of grid nodes covered so far.
\begin{prop} \label{prop_1}
    $d_H(\Smap_{n_1}(A_1), S) \leq \epsilon$.
\end{prop}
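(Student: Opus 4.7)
The plan is to verify the two inclusions required by Lemma \ref{lemma_1} with $X_1 := \Smap_{n_1}(A_1)$ and $X_2 := S$, and then appeal to Lemma \ref{lemma_1}.

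The first inclusion, $\Smap_{n_1}(A_1) \subseteq S \oplus B(\mathbf{0},\epsilon)$, I would establish in the stronger form $\Smap_{n_1}(A_1) \subseteq S$. Every $(\bx_i, r_i) \in A_1$ produced by Step 1 satisfies $\bx_i \in X \subset S$ and $r_i = r(\bx_i) = d(\bx_i, \partial S)$. Since $\bx_i$ cannot be strictly closer than $r_i$ to the boundary, the open ball $B(\bx_i, r_i)$ meets $\partial S$ in no point, so by connectedness of balls and the fact that $\bx_i \in S$ we get $B(\bx_i, r_i) \subseteq \imath S$; taking closures and using the regularity of $S$ yields the closed ball inside $S$. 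Taking the finite union over $i$ gives $\Smap_{n_1}(A_1) \subseteq S$.

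The second inclusion, $S \subseteq \Smap_{n_1}(A_1) \oplus B(\mathbf{0},\epsilon)$, would come from chaining two facts. First, the termination condition of the \textbf{while} loop in Step 1 is $|Y| = |X|$, i.e.\ $Y = X$, where $Y = G \cap \Smap_{n_1}(A_1)$ tracks the grid nodes already covered by a ball; this immediately gives $X \subseteq \Smap_{n_1}(A_1)$. Second, the preceding lemma states $d_H(X,S) \leq \epsilon$ and in particular $S \subseteq X \oplus B(\mathbf{0},\epsilon)$. Concatenating, $S \subseteq X \oplus B(\mathbf{0},\epsilon) \subseteq \Smap_{n_1}(A_1) \oplus B(\mathbf{0},\epsilon)$, as required.

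Applying Lemma \ref{lemma_1} to these two inclusions yields $d_H(\Smap_{n_1}(A_1), S) \leq \epsilon$. The only nontrivial step is the containment of the maximal inscribed closed ball inside $S$ itself, which is where the r-set hypothesis (closedness plus regularity, as recalled in footnote on $\partial S$) is used; everything else is a bookkeeping consequence of the loop invariant and the prior $\epsilon$-covering lemma. No quantitative estimate on $n_1$ or on the protrusion parameter $\mu$ enters this bound --- they only affect the size and redundancy of $A_1$, not its covering property, since $\mu$-pruning only removes nodes, and those removals are done in the same iteration that certifies coverage of the removed node via the inclusion test inside the inner \textbf{for} loop.
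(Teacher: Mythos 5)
Your proposal is correct and follows essentially the same route as the paper's own proof: both establish $\Smap_{n_1}(A_1) \subseteq S$ from the choice $r_i = d(\bx_i,\partial S)$, derive $S \subseteq \Smap_{n_1}(A_1) \oplus B(\mathbf{0},\epsilon)$ from the loop termination condition $Y = X$ (hence $X \subseteq \Smap_{n_1}(A_1)$) combined with the $\epsilon$-covering lemma, and conclude via Lemma \ref{lemma_1}. Your version merely spells out the maximal-ball containment and the irrelevance of the $\mu$-pruning in more detail than the paper does.
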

\begin{proof}
    At the end of step 1, $X \subset \Smap_{n_1}(A_1)$ hence for every point $\bx' \in S$, $d(\bx', \Smap_{n_1}(A_1)) \leq d(\bx', X)$, and since $d(\bx', X) \leq \epsilon$, $S \subset \Smap_{n_1}(A_1) \oplus B(\mathbf{0}, \epsilon)$. On the other hand, choosing the minimum distance to the boundary as ball radii guarantees $\Smap_{n_1}(A_1) \subset S \subset S \oplus B(\mathbf{0}, \epsilon)$. Applying Lemma \ref{lemma_1} yields the result.
\end{proof}

It is worthwhile noting that the above proof only makes use of $G \subset \Smap_{n_1}(A_1) \subset S$, hence Proposition \ref{prop_1} is valid for the {\sf W\&Z} algorithm \cite{Weller2011} as well, providing a basis of comparison between the two (see Table \ref{tab_sampling}).

An inevitable difficulty with this approach, regardless of the greedy criterion (e.g., distance-based \cite{Weller2011} or SDF-based) are topological discrepancies between the solid and its spherical approximation due to the possible void spaces (of feature sizes no larger than $2\epsilon$) left in between covered grid nodes. Although letting $m \gg 1$ and $\mu \ll 1$ can alleviate the problem, it does not necessarily eliminate it especially near the boundary. In addition to the trapped internal cavities, the algorithm generates numerous small disconnected balls near the boundary where $r(\bx) < \epsilon$ as it approaches the final rounds. The next two steps are meant to solve these problems.

\paragraph{\bf Step 2: Expansion}
The previous step generates a decomposition that is strictly contained in the original solid, which is unnecessary for our purposes. In the next step, the algorithm expands all the balls by increasing their radii with $\epsilon = \sqrt{3} (L/m^{\frac{1}{3}})$, which gives
\begin{equation}
    A_2 = \big\{ \ba_2 = (\bx_1, r_1 + \epsilon) ~|~ \ba_1 = (\bx_1, r_1) \in A_1 \big\}, \label{eq_A_2}
\end{equation}
i.e., $\Smap_{n_2}(A_2) = \Smap_{n_1}(A_1) \oplus B(\mathbf{0}, \epsilon)$. It is easy to show that $\Smap_{n_2}(A_2)$ is in fact as good of an approximation as $\Smap_{n_1}(A_1)$ in terms of Hausdorff error measure.
\begin{prop} \label{prop_2}
    $d_H(\Smap_{n_2}(A_2), S) \leq \epsilon$.
\end{prop}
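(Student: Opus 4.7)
The plan is to deduce the claim from Proposition~\ref{prop_1} together with the definition of $A_2$ in (\ref{eq_A_2}), using Lemma~\ref{lemma_1} as the bridge between the Hausdorff metric and two-sided $\epsilon$-containment. The key observation is that the expansion by $\epsilon$ in step~2 shrinks the one-sided deficit of $\Smap_{n_1}(A_1)$ with respect to $S$ to zero (in fact $S \subseteq \Smap_{n_2}(A_2)$), while only enlarging the other one-sided excess by at most $\epsilon$; if we can control both pieces by $\epsilon$, Lemma~\ref{lemma_1} finishes the job.

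First, I would handle the easy inclusion $S \subseteq \Smap_{n_2}(A_2) \oplus B(\mathbf{0}, \epsilon)$. By Proposition~\ref{prop_1} and Lemma~\ref{lemma_1} applied to $X_1 = S$, $X_2 = \Smap_{n_1}(A_1)$, we have $S \subseteq \Smap_{n_1}(A_1) \oplus B(\mathbf{0}, \epsilon)$, and by (\ref{eq_A_2}) the right-hand side is exactly $\Smap_{n_2}(A_2)$. Hence $S \subseteq \Smap_{n_2}(A_2)$, which trivially implies the required $\epsilon$-inflated inclusion.

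Next, I would establish the reverse inclusion $\Smap_{n_2}(A_2) \subseteq S \oplus B(\mathbf{0}, \epsilon)$. Here I rely on the fact, used in the proof of Proposition~\ref{prop_1}, that each ball $B(\bx_1, r_1)$ generated in step~1 has radius $r_1 = d(\bx_1, \partial S)$, so the ball is contained in $S$; consequently $\Smap_{n_1}(A_1) \subseteq S$. Taking Minkowski sum with $B(\mathbf{0}, \epsilon)$ on both sides and applying the monotonicity of $\oplus$ yields $\Smap_{n_2}(A_2) = \Smap_{n_1}(A_1) \oplus B(\mathbf{0}, \epsilon) \subseteq S \oplus B(\mathbf{0}, \epsilon)$.

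Combining the two inclusions and invoking Lemma~\ref{lemma_1} with $X_1 = \Smap_{n_2}(A_2)$ and $X_2 = S$ gives $d_H(\Smap_{n_2}(A_2), S) \leq \epsilon$, as desired. The argument is essentially bookkeeping once the two containments $\Smap_{n_1}(A_1) \subseteq S$ and $S \subseteq \Smap_{n_1}(A_1) \oplus B(\mathbf{0}, \epsilon)$ from step~1 are in hand, so there is no substantive obstacle; the only mild subtlety is remembering that the outward expansion in (\ref{eq_A_2}) is exactly $\epsilon$ (the same constant that upper-bounds the Hausdorff error in Proposition~\ref{prop_1}), which is what prevents the approximation from degrading even though $\Smap_{n_2}(A_2)$ is no longer guaranteed to lie inside $S$.
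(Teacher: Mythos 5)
Your proposal is correct and follows essentially the same route as the paper: both rest on the two containments $\Smap_{n_1}(A_1) \subseteq S$ and $S \subseteq \Smap_{n_1}(A_1) \oplus B(\mathbf{0}, \epsilon)$ inherited from Step 1 / Proposition \ref{prop_1}, together with the identity $\Smap_{n_2}(A_2) = \Smap_{n_1}(A_1) \oplus B(\mathbf{0}, \epsilon)$, and then invoke Lemma \ref{lemma_1}. If anything, your version is slightly more explicit than the paper's (which writes only $\Smap_{n_1}(A_1) \subset S \subset S \oplus B(\mathbf{0},\epsilon)$ and leaves the monotonicity step $\Smap_{n_2}(A_2) \subseteq S \oplus B(\mathbf{0},\epsilon)$ implicit), and you additionally note the stronger fact $S \subseteq \Smap_{n_2}(A_2)$.
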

\begin{proof}
    On the one hand, $\Smap_{n_1}(A_1) \subset S \subset S \oplus B(\mathbf{0}, \epsilon)$. On the other hand, $S \subset \Smap_{n_1}(A_1) \oplus B(\mathbf{0}, \epsilon) \subset \Smap_{n_2}(A_2) \oplus B(\mathbf{0}, \epsilon)$.  Applying Lemma \ref{lemma_1} yields the result.
\end{proof}
Moreover, the $\epsilon-$expansion `repairs' the inter-cellular cavities and disconnected balls, noting that the distance between every pair of balls corresponding to $A_1$ is upperbounded by $2\epsilon$. Therefore, although the algorithm cannot guarantee topological equivalence (i.e., homeomorphism) between $S$ and any decomposition, simply because it only relies on the incomplete shape representation provided by the discrete set $X$---which fails to capture topological information of the features that are smaller than the sampling resolution---$\Smap_{n_2}(A_2)$ does guarantee a weaker equivalence, i.e., homeomorphism with a {\it voxelization} implied by the uniform sample $X$, while $\Smap_{n_1}(A_1)$ does not.

\paragraph{\bf Step 3: Reduction}
Although $\Smap_{n_2}(A_2)$ exhibits relatively more desirable topological properties than $\Smap_{n_1}(A_1)$, they both have the same complexity $n_1 = n_2$. The former satisfies `conservative coverage' ($\Smap_{n_2}(A_2) \supset S$) while the latter fulfills `strict containment' ($\Smap_{n_1}(A_1) \subset S$).
The main purpose of the expansion, however, was to eliminate the small balls at the grid nodes near the surface with $r(\bx) < \epsilon$ that could not be covered by any other ball in $A_1$. These artifacts can comprise a significant fraction of all balls, adding to the numerical complexity with little contribution to the volume, while they adversely affect the quality of discretization by introducing high local curvatures. The question is whether we can obtain a third discretization in between the two, i.e., a sample $A_3$ such that
\begin{equation}
    \Smap_{n_1}(A_1) \subset \Smap_{n_3}(A_3) \subset \Smap_{n_2}(A_2),
\end{equation}
which retains the desired topological properties, eliminates the undesired small balls, guarantees the same error bounds, and is smaller in size ($|A_3| = n_3 < n_{1,2}$).

The next step of the algorithm reduces $A_2$ into $A_3$ by eliminating the sample points whose corresponding original (i.e., not expanded) balls encoded by $(\bx_1, r_1) \in A_1$ were contained in some other expanded ball corresponding to $(\bx_1', r_1' + \epsilon) \in A_2$, $(\bx_1, r_1) \neq (\bx_1', r_1')$:
\begin{align}
    A_3 = \big\{ &\ba_2 = (\bx_1, r_1 + \epsilon) ~|~ \ba_1 = (\bx_1, r_1) \in A_1, \nonumber\\
    &B(\bx_1, r_1) \not\subset S_{n_2 - 1}(A_2 - \{\ba_2\}) \big\}, \label{eq_A_3}
\end{align}
A significant number of small balls in $A_1$ near the surface are thus `engulfed' by the expanded neighbor balls in $A_2$---e.g., constituting $\%75-\%85$ of all balls for the Stanford Bunny in Fig. \ref{figure11} with $\mu = 0.25$--$1.00$, which grows by decreasing $\mu$. Although the new decomposition neither contains nor is contained in the original solid, the approximation error bound still holds.

\begin{prop} \label{prop_3}
    $d_H(\Smap_{n_3}(A_3), S) \leq \epsilon$.
\end{prop}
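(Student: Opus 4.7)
The plan is to invoke Lemma \ref{lemma_1} for the pair $\Smap_{n_3}(A_3)$ and $S$, by establishing the two symmetric Minkowski inclusions $\Smap_{n_3}(A_3) \subset S \oplus B(\mathbf{0}, \epsilon)$ and $S \subset \Smap_{n_3}(A_3) \oplus B(\mathbf{0}, \epsilon)$, leveraging the sandwich $\Smap_{n_1}(A_1) \subset \Smap_{n_3}(A_3) \subset \Smap_{n_2}(A_2)$ that the reduction rule in (\ref{eq_A_3}) is engineered to enforce.

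For the upper inclusion, I would note that Step 3 only deletes elements, so $A_3 \subset A_2$ and hence $\Smap_{n_3}(A_3) \subset \Smap_{n_2}(A_2)$. Proposition \ref{prop_2} together with Lemma \ref{lemma_1} already supplies $\Smap_{n_2}(A_2) \subset S \oplus B(\mathbf{0}, \epsilon)$, and composition closes this half immediately.

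For the lower inclusion, I would prove the stronger set-theoretic fact $\Smap_{n_1}(A_1) \subset \Smap_{n_3}(A_3)$ and then combine it with Proposition \ref{prop_1}, which via Lemma \ref{lemma_1} yields $S \subset \Smap_{n_1}(A_1) \oplus B(\mathbf{0}, \epsilon)$. Fix any $(\bx_1, r_1) \in A_1$: if $(\bx_1, r_1 + \epsilon) \in A_3$ then $B(\bx_1, r_1) \subset B(\bx_1, r_1 + \epsilon) \subset \Smap_{n_3}(A_3)$ trivially; otherwise (\ref{eq_A_3}) guarantees that $B(\bx_1, r_1)$ lies in the union of the remaining expanded balls of $A_2$, so every point it contributes is already covered elsewhere.

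The hard part will be the cascading-removal subtlety: a ball that swallows $B(\bx_1, r_1)$ in Step 3 may itself have been purged, so one cannot naively quote the witness as belonging to $A_3$. I would address this by observing that any such witness $(\bx_2, r_2 + \epsilon)$ must satisfy $r_2 \geq r_1 - \epsilon$, and then inducting either on the radius ordering within $A_1$ or on the iterates of the reduction operator, to show the chain of witnesses is acyclic and must terminate at some ball actually retained in $A_3$; equivalently, read (\ref{eq_A_3}) as the fixed point of a monotone purge operator whose stable output still covers every point of $\Smap_{n_1}(A_1)$. With $\Smap_{n_1}(A_1) \subset \Smap_{n_3}(A_3)$ in hand, composition with Proposition \ref{prop_1} delivers the lower inclusion, and Lemma \ref{lemma_1} concludes $d_H(\Smap_{n_3}(A_3), S) \leq \epsilon$.
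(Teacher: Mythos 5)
Your overall route is exactly the paper's: the inclusion $\Smap_{n_3}(A_3) \subset \Smap_{n_2}(A_2) \subset S \oplus B(\mathbf{0}, \epsilon)$ from $A_3 \subset A_2$ and Proposition \ref{prop_2}, the inclusion $S \subset \Smap_{n_3}(A_3) \oplus B(\mathbf{0}, \epsilon)$ via the set-theoretic claim $\Smap_{n_1}(A_1) \subset \Smap_{n_3}(A_3)$ combined with Proposition \ref{prop_1}, and Lemma \ref{lemma_1} to finish. You have also correctly isolated the only delicate point: when $(\bx_1, r_1 + \epsilon)$ is purged, the ball that swallows $B(\bx_1, r_1)$ may itself be purged. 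The paper's own proof disposes of this in a single sentence---it simply asserts that some engulfing expanded ball has its expansion retained in $A_3$---so you are being more candid than the source about where the actual work lies.

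The gap is in your proposed repair. The monotonicity you need is actually stronger than the $r_2 \geq r_1 - \epsilon$ you state: distinct grid nodes satisfy $\|\bx_1 - \bx_2\|_2 \geq 2L/m^{1/3} > \epsilon$, which together with $\|\bx_1 - \bx_2\|_2 + r_1 \leq r_2 + \epsilon$ forces $r_1 < r_2$ strictly, so the witness chain is automatically acyclic and terminates. The real problem is that acyclicity does not finish the argument, because single-ball containment does not compose at the right offset: from $B(\bx_1, r_1) \subset B(\bx_2, r_2 + \epsilon)$ and $B(\bx_2, r_2) \subset B(\bx_3, r_3 + \epsilon)$ the triangle inequality yields only $\|\bx_1 - \bx_3\|_2 + r_1 \leq r_3 + 2\epsilon$, i.e., $B(\bx_1, r_1) \subset B(\bx_3, r_3 + 2\epsilon)$, and the slack grows by $\epsilon$ at every link. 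Hence the terminal ball of your chain, although retained in $A_3$, need not contain $B(\bx_1, r_1)$: your induction produces a surviving witness but not one that covers the purged ball. (A further wrinkle: the purge condition in (\ref{eq_A_3}) is containment in the \emph{union} of the other expanded balls, so being purged does not even hand you a single-ball witness to start the chain.) To close the argument one would need something extra---for instance, observing that Proposition \ref{prop_1} only uses coverage of the grid sample $X$ and verifying directly that every $\bx \in X$ stays inside some ball retained in $A_3$, or exploiting the fact that the reduction is tested against the fixed set $A_2$ rather than the shrinking $A_3$. As written, both your sketch and the paper's one-line dismissal leave $\Smap_{n_1}(A_1) \subset \Smap_{n_3}(A_3)$ unestablished in the cascading case.
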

\begin{proof}
    By definitions in (\ref{eq_A_2}) and (\ref{eq_A_3}), $A_3 \subset A_2$ hence $\Smap_{n_3}(A_3) \subset \Smap_{n_2}(A_2) \subset S \oplus B(\mathbf{0}, \epsilon)$. On the other hand, every ball encoded by $(\bx_1, r_1) \in A_1$ is contained in at least one ball corresponding to $(\bx_2, r_2 + \epsilon) \in A_2$. If this is realized only by $(\bx_2, r_2) := (\bx_1, r_1)$, noting that always $B(\bx_1, r_1) \subset B(\bx_1, r_1 + \epsilon)$, then its $\epsilon-$expansion is included in $A_3$ by definition in (\ref{eq_A_3}). Otherwise, there exists at least another ball represented by $(\bx_2, r_2) \neq (\bx_1, r_1)$ whose $\epsilon-$expansion is included in $A_3$, i.e., $(\bx_2, r_2 + \epsilon) \in A_3$, and $B(\bx_1, r_1) \subset B(\bx_2, r_2 + \epsilon)$. In either case, $B(\bx_1, r_1)$ is included in $\Smap_{n_3}(A_3)$, hence $\Smap_{n_1}(A_1) \subset \Smap_{n_3}(A_3)$. Therefore, $S \subset \Smap_{n_1}(A_1) \oplus B(\mathbf{0}, \epsilon)$ implies $S \subset \Smap_{n_3}(A_3) \oplus B(\mathbf{0}, \epsilon)$. Applying Lemma \ref{lemma_1} yields the result.
\end{proof}

\begin{figure}
    \centering
    \includegraphics[width=0.48\textwidth]{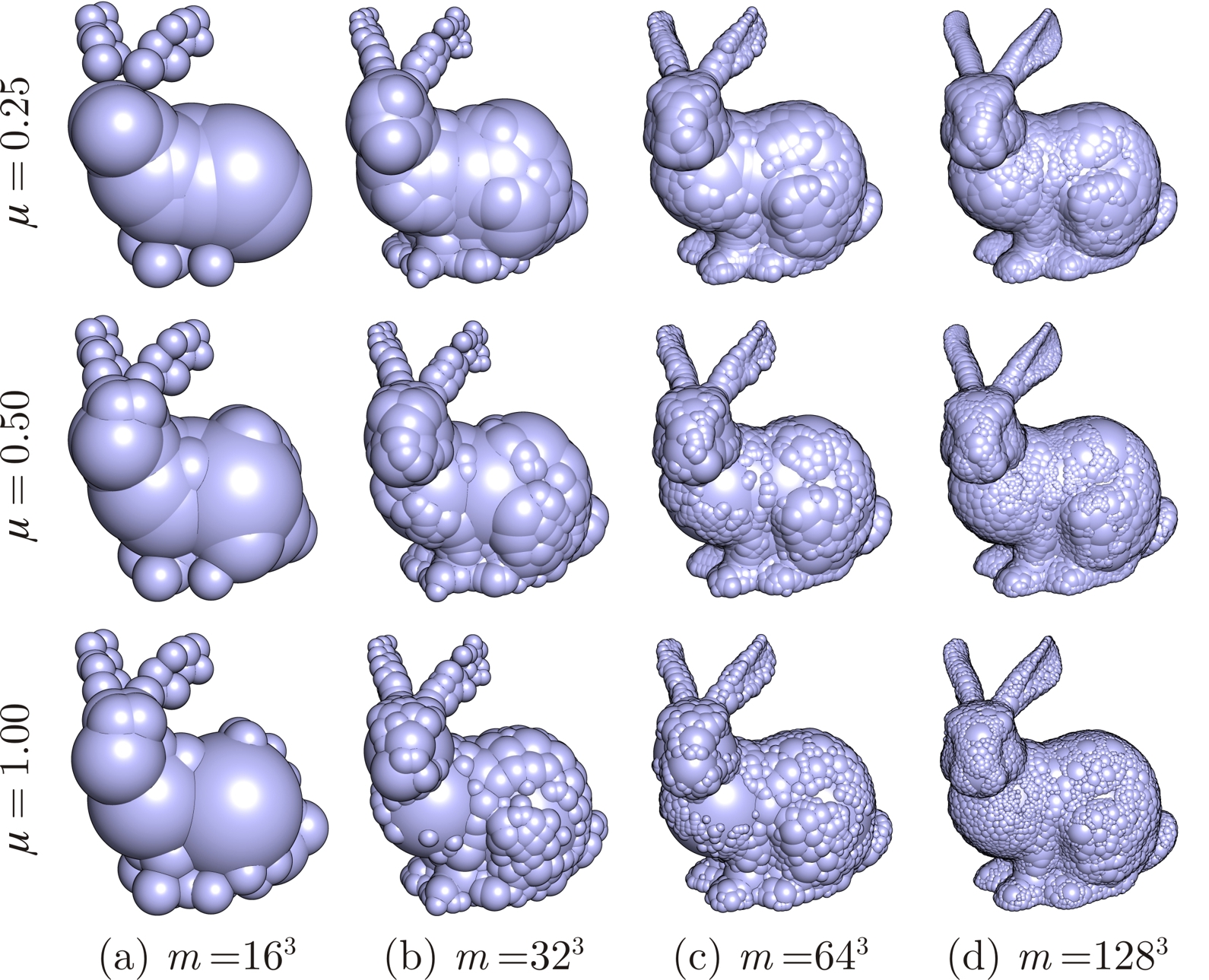}
    \caption{The SDF-based spherical decomposition $\Smap_{n_3}(A_3)$ for different initial grid sizes and protrusion factors in Table \ref{tab_sampling}.} \label{figure11}
\end{figure}
\begin{table}
    \caption{Arithmetic complexities of spherical samples generated by our SDF-based algorithm ({\sf B\&I}) using different protrusion factors, compared to those of {\sf W\&Z} \cite{Weller2011}.} \label{tab_sampling}
    \vspace{-0.2cm}
    \scalebox{0.72}{
    \begin{tabular}{| c  c | r  r | r  r | r  r | r|}
    \hline
    \multicolumn{2}{|c|}{Initial Grid} & \multicolumn{2}{|c|}{{\sf B\&I}, $\mu = 0.25$} & \multicolumn{2}{|c|}{{\sf B\&I}, $\mu = 0.50$} & \multicolumn{2}{|c|}{{\sf B\&I}, $\mu = 1.00$} & {\sf W\&Z} \\
    \hline
    $m$ & $\epsilon$ & $n_{1,2}$ & $n_3$ & $n_{1,2}$ & $n_3$ & $n_{1,2}$ & $n_3$ & $n$ \\
    \hline
    $2^{12}$ & $0.0541$ & $  318$ & $   36$ & $  307$ & $   53$ & $  275$ & $   67$ & $  334$ \\
    $2^{15}$ & $0.0271$ & $1,552$ & $  161$ & $1,445$ & $  251$ & $1,295$ & $  324$ & $1,024$ \\
    $2^{18}$ & $0.0135$ & $6,842$ & $1,024$ & $6,374$ & $1,048$ & $5,436$ & $1,311$ &$15,592$ \\
    $2^{21}$ & $0.0068$ &$26,217$ & $3,686$ &$23,891$ & $4,422$ &$20,674$ & $5,160$ &$89,030$ \\
    \hline
    \end{tabular}
    }
\end{table}

All aforementioned steps are summarized in Algorithm \ref{alg_sampling}. Figure \ref{figure10} illustrates the different geometric objects involved in the algorithm, including the output decompositions of each step. Once the decomposition is generated, a variety of methods can be used to create sphere-tree hierarchies---e.g., the simple bottom-up merging algorithm in \cite{Hubbard1996} that groups and merges a pre-specified number of balls  that are near each other in $A_3$ (i.e., the children) into their smallest enclosing ball (i.e., the parent).

Table \ref{tab_sampling} compares the numerical complexity of each decomposition $|A_1| = |A_2|$, and $|A_3|$ with that of {\sf W\&Z} \cite{Weller2011} for the Stanford Bunny in Fig. \ref{figure5} (a) using different grid resolutions and protrusion factors. We observe that unlike the case for $n_1 = n_2$, $n_3$ decreases with more conservative choices of $\mu$, making the smaller value of $\mu = 0.25$ even more favorable. The same value was used to generate the results in Table \ref{tab_comp} and Figs. \ref{figure5} and \ref{figure6} of Section \ref{sec_res}.
After this reduction, the output of our algorithm contains fewer balls than {\sf W\&Z} \cite{Weller2011} by an order of magnitude, allowing faster termination---in a few minutes for the slowest case.
Figure \ref{figure11} illustrates $\Smap_{n_3}(A_3)$ corresponding to the parameters in Table \ref{tab_sampling}.

\section{The Motion Group} \label{app_SE3}

According to Requicha \cite{Requicha1977a}, a `rigid body' $[S]$ is an equivalence class of congruent r-sets related by the elements in $\mathcal{C} := SE(3)$, i.e., $[S] = \{ MS ~|~ M \in SE(3) \} \subset \rset$.
The special Euclidean group $SE(3) \cong SO(3) \ltimes \RRR$ represents the configuration space of all rigid motions (i.e., distance and orientation preserving affine maps). It extends the special orthogonal group $SO(3)$, which contains all proper rotations (i.e., distance and orientation preserving linear maps) over its action on $\RRR$:
\begin{equation}
    SO(3) = \left\{ R \in M(3) ~|~ R \RT = \RT R = I, \mathrm{det}(R) = +1 \right\},
\end{equation}
with matrix multiplication used as the group operation. $M(3) \cong \R^9$ is the space of all $3 \times 3$ matrices, and $I = id_{M(3)}$ is the identity matrix.

As a result of the semidirect product, every rigid motion can be parameterized by a tuple $M = (R, \bt) \in SE(3)$ with the following properties:
\begin{itemize}
    \item The group operation is defined as $(R_1, \bt_1)(R_2, \bt_2) = (R_1 R_2, \bt_1 + R_1 \bt_2)$.
    \item The inversion is defined as $(R, \bt)^{-1} = (\RT, -\RT\bt)$.
    \item The action on $\RRR$ is defined as $(R, \bt) \bx = (R \bx) + \bt$ for all $\bx \in \RRR$ (i.e., a noncommutative sequence of a rotation $\bx \mapsto R\bx$ followed by a translation $\bx \mapsto \bx + \bt$), hence $(R, \bt)^{-1} \bx = \RT (\bx - \bt)$.
    \item Noting that $SE(3)$ is a subgroup of $SE(4)$, we can define the action of $SE(3)$ on 4D geometry simply as $(R, \bt)(\bx, r) := ((R, \bt)\bx, r)$ for $(\bx, r) \in \RRRR$.
\end{itemize}
Therefore, every moved instance of $S$ is described by $(R, \bt) S := \{ (R, \bt) \bx ~|~ \bx \in S \} \in [S]$.

Since $SE(3)$ is a Riemannian manifold, any pair of configurations can be joined along a curve that minimizes the geodesic distance, defined as the Riemannian metric
\begin{equation}
    d(M_1, M_2) := \left( \| \ln (\RT_1 R_2) \|_2^2 + \| \bt_2 - \bt_1 \|_2^2 \right)^{\frac{1}{2}}, \label{eq_geodesic}
\end{equation}
where the logarithmic term can be viewed as the geodesic $d(R_1, R_2)$ over the manifold $SO(3)$ (which is in a 1:2 homeomorphism with a unit $3-$sphere embedded in $\RRRR$), $\ln(\cdot)$ is the matrix logarithm and $\| \cdot \|_2$ is the Frobenius norm, which is the same as the Euclidean $L^2-$norm of a $9-$vector formed from the elements of a $3 \times 3$ matrix \cite{Nelaturi2011}. Using (\ref{eq_geodesic}), $SE(3)$ also forms a metric (hence a topological) space.

\section{Fourier Transforms} \label{app_CFT}

A function $f: \RRR \rightarrow \C$ is square-integrable (i.e., $f \in L^2(\RRR)$) if $v(|f|^2) = v(f\bar{f})$ defined by the integral in (\ref{eq_null}) converges, i.e., if its $L^2-$norm $\| f \|_2 := \sqrt{\langle f, f \rangle}$ exists, where the function inner product $\langle \cdot, \cdot \rangle: L^2(\RRR) \times L^2(\RRR) \rightarrow \C$ is
\begin{equation}
    \left\langle f_1, f_2 \right\rangle = \int_{\RRR} f_1(\bx) \bar{f}_2(\bx) ~d\bx
\end{equation}
in which $\bar{f}(\bx) = \overline{f(\bx)}$ denotes conjugation. Also related are the cross-correlation and convolution operators $\star, \ast: L^2(\RRR) \times L^2(\RRR) \rightarrow L^2(\RRR)$:
\begin{align}
    (f_1 \star f_2)(\bt) &= \int_{\RRR} \bar{f}_1(\bx) f_2(\bx + \bt) ~d\bx, \\
    (f_1 \ast f_2)(\bt) &= \int_{\RRR} f_1(\bx) \tilde{f}_2(\bx - \bt) ~d\bx,
\end{align}
in which $\tilde{f}(\bx) = f(-\bx)$ denotes reflection. It can be verified by a simple change of variables that $\langle f_1, (\bar{f}_2 \circ T^{-1}) \rangle = (\bar{f}_2 \star f_1) (\bt) = (f_1 \ast \tilde{f}_2) (\bt)$ where $T^{-1}\bx = \bx - \bt$ is a shift.

The continuous Fourier transform (CFT) often denoted by $\F: L^2(\RRR) \rightarrow L^2(\RRR)$ is defined as
\begin{align}
    \F \{ f \} (\freq) &= \int_{\RRR} f(\bx) e^{-2\pi \ii (\freq \cdot \bx)} ~d\bx, \label{eq_CFT_int}
\end{align}
which can be conceptualized as a decomposition of the $L^2-$function along orthogonal basis $e^{2\pi \ii (\freq \cdot \bx)}$ (see (\ref{eq_CFT_def})). Using the notations $\hat{f} = \F \{ f \}$ and $f = \F^{-1} \{ \hat{f} \}$, the following properties are easily verified:
\begin{itemize}
    \item Translation in the physical domain converts to a multiplier in the frequency domain, i.e.,
        \begin{align}
            \F \{ (f \circ T^{-1}) \} = \F \{ \varsigma_\bt\} \F\{ f \} = \hat{\varsigma}_\bt \hat{f},
        \end{align}
        where $\varsigma_\bt(\bx) = (\delta^3 \circ T^{-1})(\bx) = \delta^3(\bx - \bt)$ whose CFT is the sinusoidal basis function $\hat{\varsigma}_\bt (\freq) = e^{-2\pi \ii (\freq \cdot \bt)}$.
    \item As a result of the linearity of (\ref{eq_CFT_int}), linear maps (e.g., reflections and rotations) are preserved under the CFT, e.g., $\hat{\tilde{f}} = \tilde{\hat{f}}$ and
        \begin{align}
            \F \{ (f \circ \RT) \} = \F \{ f \} \circ \RT = \hat{f} \circ \RT,
        \end{align}
    \item If $f = \bar{f}$ (i.e., $f$ is real-valued), one obtains $\hat{\tilde{f}} = \bar{\hat{f}}$, i.e., the CFT converts reflection to conjugation. Substituting from the previous property results in $\tilde{\hat{f}} = \bar{\hat{f}}$, a property known as the {\it Hermitian symmetry}.
    \item The inner product structure is preserved under the CFT, i.e., $\langle f_1, f_2 \rangle = \langle \hat{f}_1, \hat{f}_2 \rangle$, an important result known as {\it Parseval's theorem}.
    \item The convolution in physical domain converts to pointwise multiplication in the Fourier domain, i.e.,
        \begin{align}
            \F \{ (f_1 \ast f_2) \} = \F \{ f_1 \}\F \{ f_2 \} = \hat{f}_1 \hat{f}_2,
        \end{align}
        a crucial result known as the {\it convolution theorem}.
\end{itemize}

By discretizing the function's domain into a finite set of knots, the integral in (\ref{eq_CFT_int}) can be approximated by a Riemann sum, which can be expressed in terms of the discrete Fourier transform (DFT):
\begin{align}
    \hat{f}_j &= \sum_{i = 1}^n f_i e^{-2\pi \ii (\freq_j \cdot \bx_i)}, \quad (1 \leq j \leq n). \label{eq_DFT_int}
\end{align}
If both physical and frequency sample points are equispaced (e.g., $\bx_i = (i-1)/n$ and $\freq_j = (j-1)/n$) (\ref{eq_DFT_int}) can be computed for all sample points in $O(n \log n)$ time (in contrast to $O(n^2)$ time for cascade computation) using the radix-2 fast Fourier transform (FFT) \cite{Cooley1965}. If either (or both) of the samples are nonequispaced, the so-call nonequispaced FFT (NFFT) \cite{Potts2001} can be used, which takes $O((\alpha n) \log (\alpha n))$ time with an `over-sampling factor' of $\alpha = O(1)$. The latter is advantageous over the former if the nonuniform sampling is sparse.

On a final note, the extension of the aforementioned 3D definitions to 4D is trivial. The reader is referred to \cite{Katznelson2004} for an in-depth study of harmonic analysis.

\end{document}